\colorlet{darkblue}{blue!50!black}
\newcommand{\e}{\mathrm{e}}
\renewcommand{\d}{\mathrm{d}}
\renewcommand{\i}{\mathrm{i}}
\newcommand{\zz}{\mathbb{Z}}
\newcommand{\nn}{\mathbb{N}}
\newcommand{\rr}{\mathbb{R}}
\newcommand{\cH}{\mathcal{H}}
\newcommand{\fA}{\mathfrak{A}}
\newcommand{\fAloc}{{\mathfrak{A}_\mathrm{loc}}}
\newcommand{\cS}{\mathcal{S}}
\newcommand{\cF}{\mathcal{F}}
\newcommand{\cB}{\mathcal{B}}
\newcommand{\cBr}{\mathcal{B}^r}
\newcommand{\cSI}{\mathcal{S}_\mathrm{I}}
\newcommand{\cSeq}{\mathcal{S}_\mathrm{eq}}
\newcommand{\cSwg}{\mathrm{WG}}
\newcommand{\cO}{\mathcal{O}}
\newcommand{\supp}{\mathop{\mathrm{supp}}\nolimits}
\newcommand{\Dom}{\mathop{\mathrm{Dom}}\nolimits}
\newcommand{\wstarlim}{\mathop{\mathrm{w}^\ast-\mathrm{lim}}\limits}
\newcommand{\wstar}[1]{$\mathrm{weak}^\ast$\!-{#1}}
\def\one{\mathbbm{1}}
\newcommand{\lb}{\llbracket}
\newcommand{\rb}{\rrbracket}
\def\be{\begin{equation}}
\def\ee{\end{equation}}
\theoremstyle{plain}
\newtheorem{thm}{Theorem}[section]
\newtheorem{lem}[thm]{Lemma}
\newtheorem{prop}[thm]{Proposition}
\theoremstyle{definition}
\newtheorem{defn}[thm]{Definition}
\newcommandx{\ca}[2][1=]{\todo[inline,author={ca},
	linecolor=green,backgroundcolor=green!15,bordercolor=green,#1]{#2}}
\newcommandx{\canot}[2][1=]{\todo[
	linecolor=green,backgroundcolor=green!15,bordercolor=green,#1]{#2}}
\newcommandx{\ToDo}[2][1=]{\todo[inline,author={v},
	linecolor=red,backgroundcolor=red!15,bordercolor=red,#1]{#2}}
\newcommandx{\NEW}[2][1=]{\todo[inline,
	linecolor=magenta,backgroundcolor=magenta!15,bordercolor=magenta,#1]{#2}}
\title{A note on adiabatic time evolution and quasi-static processes in translation-invariant quantum systems}
\author[1]{Vojkan Jak\v si\'c}
\author[2]{Claude-Alain Pillet$^\ast$}
\author[3]{Clément Tauber}
\affil[1]{Department of Mathematics and Statistics, McGill University,
	805 Sherbrooke Street West, Montreal, QC, H3A 2K6, Canada}
\affil[2]{Aix Marseille Univ, Université de Toulon, CNRS, CPT, Marseille, France}
\affil[3]{Institut de Recherche Mathématique Avancé, UMR 7501 Université de Strasbourg et CNRS, 7 rue René-Descartes, 67000 Strasbourg, France}
\date{\today}
\begin{document}
\def\today{}	
\maketitle

\centerline{\large \bf Dedicated to the memory of Krzysztof Gaw\k edzki}

\vspace{1cm}
\begin{small}
\noindent{\bf Abstract.}
We study the slowly varying, non-autonomous quantum dynamics of a translation invariant spin or fermion system on the lattice $\zz^d$. This system is  assumed to be initially in thermal equilibrium, and we consider realizations of quasi-static processes in the adiabatic limit. By combining the Gibbs variational principle with the notion of quantum weak Gibbs states introduced in~\cite{Jaksic2023a}, we establish a number of general structural results regarding such realizations. In particular, we show that such a quasi-static process is incompatible with the property of approach to equilibrium studied in this previous work.
\end{small}

\section{Introduction}

This paper is a direct continuation of~\cite{Jaksic2023a}. Although its topic and implications are different, we will rely heavily on the technical tools and conceptual framework introduced in this earlier work. In particular, in what follows and without further saying, we will use terminology, notation, and results stated in the introductory section of~\cite{Jaksic2023a}.\footnote{The general references in~\cite{Jaksic2023a} related to the mathematical theory of algebraic quantum statistical mechanics apply to this work as well.}

The set of quantum weak Gibbs states $\cSwg(\Phi)$ for the spin/fermion interaction $\Phi\in\cBr$ is introduced in Section~2.2 of~\cite{Jaksic2023a}. An immediate consequence of our definition of weak Gibbs states is the existence of the specific relative entropy and the validity of the formula
\begin{equation}
s(\nu|\omega)\coloneqq\lim_{\Lambda \uparrow \zz^d}
\frac{S(\nu_\Lambda|\omega_\Lambda)}{|\Lambda|}=-s(\nu)+\nu(E_\Phi)+P(\Phi),
\label{ad-wg}
\end{equation}
for any $\nu\in\cSI(\fA)$ and $\omega\in\cSwg(\Phi)$. The  current technology allowed us to prove the identification $\cSeq(\Phi)=\cSwg(\Phi)$ in dimension $d=1$  for finite range interactions, and for any $d$ and $\Phi \in \cBr$  in the high temperature regime. Since this identification is expected to hold much more generally, we have introduced the notion of {\em regular pair} $(\omega,\Phi)\in\cSI(\fA)\times\cBr$ for which~\eqref{ad-wg} holds for all $\nu\in\cSI(\fA)$. This notion, combined with the Gibbs variational principle, has led to several structural results about the fundamental problem of Approach to Equilibrium---the zeroth law of thermodynamics---in quantum statistical mechanics.

In this work, starting with the same general ingredients, we study whether quasi-static transitions of extended spin or fermion systems can be described by slowly varying, non-auto\-nomous translation invariant interactions. The emerging structural theory in particular yields that a realization of such a quasi-static transition is possible only if the specific entropy is constant along the state trajectory. Combined with the results of~\cite{Jaksic2023a}, this gives that, in the adiabatic limit,  quasi-static transitions  are incompatible with the approach to equilibrium  in the translation invariant setting of algebraic quantum statistical mechanics.

The paper is organized as follows. In Sections~\ref{sec-old-ad} and~\ref{sec-old-ad-1}  we briefly review the basic adiabatic theorems of quantum mechanics and quantum statistical mechanics. For a more complete review, historical perspective, and additional references, we refer the reader to~\cite{Benoist2023}.\footnote{In particular, we have not attempted here to formulate these results in a technically optimal setting.} We introduce our translation invariant framework in Section~\ref{sec-setting}. Our main results are stated and proved in Section~\ref{sec-main-re}.  They are compared with the main result of~\cite{Jaksic2023a} in Section~\ref{sec-no-go}. Similar to~\cite{Jaksic2023a}, most of our proofs are technically simple. The proofs of two technically more involved results, Propositions~\ref{prop-balance} and~\ref{prop:alpha}, are postponed to Section~\ref{sec-ln}. For the reader's convenience, Appendix~\ref{app:notations} collects the notations inherited from~\cite{Jaksic2023a}. 

\paragraph*{Acknowledgments} This work was supported by the French {\em Agence Nationale de la Recherche}, grant NONSTOPS (ANR-17-CE40-0006-01, ANR-17-CE40-0006-02, ANR-17-CE40-0006-03) and the CY Initiative of Excellence through the grant Investissements d’Avenir ANR-16-IDEX-0008. It was partly developed during VJ's stay at the CY Advanced Studies, whose support is gratefully acknowledged. VJ also acknowledges the support of NSERC. The authors wish to thank Martin Fraas for useful discussions.

\section{Adiabatic theorems in quantum mechanics}
\label{sec-old-ad}

{\bf The adiabatic theorem with a gap.} The formulation and the proof of the adiabatic theorem in quantum mechanics go back to the seminal work of Born and Fock~\cite{Born1928}. The first modern proof, in functional analytic framework, is due to  Kato~\cite{Kato1950}. The method of the proof introduced there played an important role in all the future developments of the subject. For refinements of Kato's result  under the gap assumption made in~\cite{Kato1950}, see~\cite{Avron1987,Joye1991a,Nenciu1993}.

Let $H$ be a (possibly unbounded) self-adjoint operator on the Hilbert space $\cH$ and 
\begin{equation}
V:[0,1]\to\cB(\cH)
\label{ear-ear}
\end{equation}
a continuous map taking its values in the bounded self-adjoint operators on $\cH$. For $T>0$, we consider the non-autonomous Schrödinger equation
$$
\i\partial_t\psi(t)=(H+V(t/T))\psi(t),\qquad\psi(s)=f,\qquad s,t\in[0,T].
$$
For any $f\in\Dom(H)$ this equation has a unique solution $[0,T]\ni t\mapsto\psi(t)\in\Dom(H)$, which can be written as $\psi(t)=U_T^{s/T\to t/T}f$, where $[0,1]\times[0,1]\ni(\sigma,\tau)\mapsto U_T^{\sigma\to\tau}\in\cB(\cH)$ is a strongly continuous map taking its values in the unitary operators on $\cH$ and satisfying $U_T^{\sigma\to\sigma}=I$, $U_T^{\sigma\to\tau}\Dom(H)\subset\Dom(H)$, and $U_T^{\tau\to\upsilon}U_T^{\sigma\to\tau}=U_T^{\sigma\to\upsilon}$ for all $\sigma,\tau,\upsilon\in[0,1]$. $U_T$ is the propagator associated to the time-dependent Hamiltonian $[0,1]\ni\tau\mapsto T(H+V(\tau))$.

For each $\tau\in[0,1]$, let $E(\tau)$ be an eigenvalue of finite multiplicity of $H(\tau)=H+V(\tau)$ which is  uniformly separated from the remaining part of its spectrum, {\sl i.e.,} 
\begin{equation}
\inf_{\tau\in[0,1]}{\rm dist}(E(\tau),{\rm sp}(H(\tau))\setminus\{E(\tau)\})>0,
\label{kato-gap}
\end{equation}
and denote by $P(\tau)$ the orthogonal projection onto the associated eigenspace. 

\begin{thm}{\bf\!\cite{Kato1950}}\label{kato-thm}
Suppose, in addition to~\eqref{kato-gap}, that the map~\eqref{ear-ear} is $C^2$. Then,  as $T\uparrow \infty$,
$$
\sup_{\tau\in[0,1]}\|(I-P(\tau))U_T^{0\to\tau}P(0)\|=O(T^{-1}).
$$
\end{thm}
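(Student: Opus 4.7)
My plan is to follow Kato's classical strategy: construct an auxiliary adiabatic propagator $\Phi_T$ that intertwines the spectral projections $P(\tau)$ exactly, and control the discrepancy $U_T-\Phi_T$ by integration by parts, using both the gap condition~\eqref{kato-gap} and the $C^2$ regularity of $V$.

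First, the uniform gap lets me pick, for every $\tau\in[0,1]$, a positively oriented circle $\Gamma(\tau)$ around $E(\tau)$ of radius bounded from below, lying in the resolvent set of $H(\tau)$ with the rest of $\mathrm{sp}(H(\tau))$ in its exterior. The Riesz formula $P(\tau)=\frac{1}{2\pi\i}\oint_{\Gamma(\tau)}(z-H(\tau))^{-1}\d z$ together with $V\in C^2$ then yields that $\tau\mapsto P(\tau)$ is $C^2$ in norm with uniformly bounded derivatives. I would then introduce Kato's self-adjoint generator $K(\tau):=\i[\dot P(\tau),P(\tau)]$ and define $\Phi_T^{\sigma\to\tau}$ as the propagator associated with $TH(\tau)+K(\tau)$. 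A short calculation based on $[H(\tau),P(\tau)]=0$ and the identity $P\dot PP=0$ (from $P^2=P$) shows that $\partial_\tau(\Phi_T^{0\to\tau\,-1}P(\tau)\Phi_T^{0\to\tau})=0$, hence the exact intertwining $P(\tau)\Phi_T^{0\to\tau}=\Phi_T^{0\to\tau}P(0)$. In particular $(I-P(\tau))\Phi_T^{0\to\tau}P(0)=0$, so the theorem reduces to the uniform estimate $\sup_{\tau\in[0,1]}\|U_T^{0\to\tau}-\Phi_T^{0\to\tau}\|=O(T^{-1})$.

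For this last bound, set $W_T(\tau):=\Phi_T^{0\to\tau\,-1}U_T^{0\to\tau}$. Differentiation gives $\partial_\tau W_T=\i\tilde K(\tau)W_T$ with $\tilde K(\tau):=\Phi_T^{0\to\tau\,-1}K(\tau)\Phi_T^{0\to\tau}$, whence $W_T(\tau)-I=\i\int_0^\tau\tilde K(\sigma)W_T(\sigma)\,\d\sigma$, which a priori is only $O(1)$. The factor $T^{-1}$ is extracted by observing that $K(\tau)$ is off-diagonal for the splitting $P(\tau)\oplus(I-P(\tau))$, and that the gap condition allows one to solve the Sylvester equation $K(\tau)=\i[H(\tau),X(\tau)]$ by a bounded, $C^1$, off-diagonal operator $X(\tau)$, whose $PQ$-block reads $P(\tau)X(\tau)(I-P(\tau))=-\i P(\tau)K(\tau)(I-P(\tau))(E(\tau)I-H(\tau))^{-1}|_{(I-P(\tau))\cH}$, and symmetrically for the other block. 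Substituting $K=\i[H,X]$ and using the ODE for $\Phi_T$ yields the identity $T\tilde K(\sigma)=\partial_\sigma\tilde X(\sigma)-\Phi_T^{0\to\sigma\,-1}\bigl(\i[K(\sigma),X(\sigma)]+\dot X(\sigma)\bigr)\Phi_T^{0\to\sigma}$ with $\tilde X:=\Phi_T^{0\to\sigma\,-1}X\Phi_T^{0\to\sigma}$. Integrating by parts in $\int_0^\tau\tilde K\,W_T\,\d\sigma$ then produces $O(T^{-1})$ boundary terms and a uniformly bounded remainder integral, yielding the desired bound.

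The main obstacle is precisely this construction of $X(\tau)$ with uniform norm and $C^1$ control: invertibility of $E(\tau)I-H(\tau)$ on $(I-P(\tau))\cH$ rests on the gap hypothesis~\eqref{kato-gap}, while the derivative $\dot X(\tau)$ needed for the integration by parts inherits one degree of smoothness from $P(\tau)$, which is where the $C^2$ assumption on $V$ enters. Ensuring that the resulting norms and remainder terms are uniformly controlled in $\tau\in[0,1]$ is the technical heart of the argument; all other steps are essentially direct manipulations.
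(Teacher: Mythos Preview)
Your argument is correct and is precisely the classical Kato strategy: build the adiabatic propagator from $TH(\tau)+\i[\dot P(\tau),P(\tau)]$, use the intertwining to reduce to $\|U_T-\Phi_T\|$, and extract the factor $T^{-1}$ by solving the commutator equation $K=\i[H,X]$ through the spectral gap and integrating by parts. The computations you sketch (the vanishing of $P\dot PP$, the off-diagonality of $K$, the Sylvester inversion on $(I-P)\cH$, the identity $T\tilde K=\partial_\sigma\tilde X-\Phi_T^{-1}(\i[K,X]+\dot X)\Phi_T$) all check out, and your identification of where the $C^2$ hypothesis enters---through $\dot X$, which requires one derivative of $\dot P$---is accurate.

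There is nothing to compare on the paper's side: Theorem~\ref{kato-thm} is only \emph{stated} here as background and attributed to~\cite{Kato1950}; no proof is given. Your sketch is the standard modern rendition of Kato's original argument (as refined in~\cite{Avron1987} and the later literature the paper cites), so it is entirely appropriate.
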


\noindent{\bf The adiabatic theorem without a gap.} This fundamental refinement of Theorem~\ref{kato-thm} goes back to Avron and Elgart~\cite{Avron1999}, who dispense with the gap assumption~\eqref{kato-gap}; for an important technical comment on their result see~\cite{Teufel2001}. The setting is the same as in Theorem~\ref{kato-thm}, except that~\eqref{kato-gap} is replaced by the following  assumption:
\begin{quote}{\bf Assumption (AD)} There exists a $C^2$ map $[0,1]\ni\tau\mapsto P(\tau)\in\cB(\cH)$ such that,
for Lebesgue a.e.\;$\tau\in[0,1]$, $P(\tau)$ is the orthogonal projection onto the eigenspace of a finite multiplicity eigenvalue of $H(\tau)$. 
\end{quote}
\begin{thm}{\bf~\cite{Avron1999,Teufel2001}}\label{ae-thm}
Suppose that {\bf (AD)} holds. Then 
$$
\lim_{T\rightarrow\infty}\,\sup_{\tau\in[0,1]}\|(I-P(\tau))U_T^{0\to\tau}P(0)\|=0.
$$
\end{thm}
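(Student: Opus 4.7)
The plan is to extend Kato's adiabatic strategy to the gapless setting, following Avron--Elgart. First, I would introduce the \emph{adiabatic generator}
$$
K(\tau) = \i [\dot P(\tau), P(\tau)],
$$
which is bounded and $C^1$ on $[0,1]$ by assumption~\textbf{(AD)}, and the \emph{adiabatic propagator} $U_A^{0\to\tau}(T)$, defined as the unitary solution of
$$
\i \partial_\tau U_A^{0\to\tau}(T) = \bigl( T H(\tau) + K(\tau) \bigr) U_A^{0\to\tau}(T), \qquad U_A^{0\to 0}(T) = I.
$$
The key algebraic identity $\i \dot P = [K, P]$ (which follows from $P^2 = P$) together with the fact that $[H(\tau), P(\tau)] = 0$ for a.e.~$\tau$ then yields Kato's \emph{intertwining property} $U_A^{0\to\tau}(T) P(0) = P(\tau) U_A^{0\to\tau}(T)$, whence $(I - P(\tau)) U_A^{0\to\tau}(T) P(0) = 0$. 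It therefore suffices to prove that $\|\bigl( U_T^{0\to\tau} - U_A^{0\to\tau}(T) \bigr) P(0)\|$ tends to zero uniformly in $\tau \in [0,1]$ as $T \to \infty$.

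Subtracting the two evolution equations and applying Duhamel's formula gives the exact representation
$$
\bigl( U_T^{0\to\tau} - U_A^{0\to\tau}(T) \bigr) P(0) = \i \int_0^\tau U_T^{s \to \tau} K(s) U_A^{0 \to s}(T) P(0) \, \d s.
$$
Under the gap hypothesis~\eqref{kato-gap}, the commutator equation $[H(s), X(s)] = (I - P(s)) K(s) P(s) + \mathrm{h.c.}$ is solvable via the reduced resolvent, and substituting $K = [H, X]$ plus its adjoint into the integrand lets one integrate by parts against the rapid phase $\partial_s U_T^{s\to\tau} = \i T U_T^{s\to\tau} H(s)$, gaining a factor $T^{-1}$ and yielding Theorem~\ref{kato-thm}.

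Without a spectral gap, I would follow Avron--Elgart and replace this commutator equation by the regularized version
$$
\i \epsilon X_\epsilon(s) + [H(s), X_\epsilon(s)] = (I - P(s)) K(s) P(s) + \mathrm{h.c.},
$$
which always admits the explicit bounded solution $X_\epsilon(s) = -\i \int_0^\infty \e^{-\epsilon t} \e^{-\i t H(s)} \bigl( (I - P(s)) K(s) P(s) + \mathrm{h.c.} \bigr) \e^{\i t H(s)} \, \d t$. Inserting $X_\epsilon$ into the Duhamel integral and integrating by parts as above produces one error term of size $\epsilon \|X_\epsilon\|$ and another of order $(T \epsilon)^{-1}$; choosing $\epsilon = \epsilon(T) \to 0$ with $T \epsilon(T) \to \infty$ would then close the argument, delivering the stated convergence but without any explicit rate.

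The main obstacle is controlling $\|X_\epsilon(s)\|$: the trivial bound $O(\epsilon^{-1})$ is too weak, since then $\epsilon \|X_\epsilon\|$ would not vanish. One must exploit the finite rank of $P(s)$ and the $C^2$ regularity furnished by~\textbf{(AD)} to perform a careful spectral analysis of $H(s)$ near the eigenvalue supporting $P(s)$, and to show that the contribution of its small denominators is in fact $o(\epsilon^{-1})$---equivalently, that the time-averaged off-diagonal dynamics of $H(s)$ dephases against the finite-dimensional range of $P(s)$. This delicate point, left partly implicit in the original argument of~\cite{Avron1999}, was subsequently clarified in~\cite{Teufel2001}.
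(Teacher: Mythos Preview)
The paper does not give its own proof of this theorem; it is stated purely as a review of a known result, with the proof deferred entirely to the cited references~\cite{Avron1999,Teufel2001} (and to~\cite{Benoist2023} for further discussion). There is therefore no in-paper argument to compare your sketch against.

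That said, your outline is faithful to the Avron--Elgart strategy as clarified by Teufel: reduce via the intertwining property to comparing the physical and Kato-adiabatic propagators, express their difference through Duhamel, replace the (in general unsolvable) commutator equation $[H,X]=K$ by its Abelian regularization, integrate by parts against the fast phase, and finally balance the two resulting error terms by a suitable choice of $\epsilon=\epsilon(T)$. You also correctly isolate the crux of the matter: the trivial bound $\|X_\epsilon\|=O(\epsilon^{-1})$ is useless, and one must exploit the finite rank of $P(s)$ to upgrade this to $\epsilon\|X_\epsilon(s)\|\to0$. One technical point worth flagging: the integration by parts also produces a term involving $\partial_s X_\epsilon(s)$, and controlling it (uniformly in $s$) is where the $C^2$ regularity of $\tau\mapsto P(\tau)$ enters; this, together with the $o(\epsilon^{-1})$ estimate, is precisely what Teufel's note~\cite{Teufel2001} supplies. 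With that caveat, your sketch is a sound roadmap to the cited proof.
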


\section{Adiabatic theorems in quantum statistical mechanics}
\label{sec-old-ad-1}

{\bf The isothermal adiabatic theorem for local perturbations.} Let $(\cO,\alpha)$ be a $C^\ast$-dynamical system. Denote by $\delta$ its generator, $\alpha^t=\e^{t\delta}$, and by $\omega$ an $\alpha$-KMS state at inverse temperature $\beta>0$. To a self-adjoint $V\in\cO$ we associate the perturbed dynamics $\alpha_V^t=\e^{t\delta_V}$, where $\delta_V=\delta+\i[V,\,\cdot\,]$, and the perturbed $(\alpha_V,\beta)$-KMS state $\omega_V$.

Let
\begin{equation}
[0,1]\ni\tau\mapsto V(\tau)\in\cO
\label{map-ear}
\end{equation}
be a continuous map such that $V(\tau)$ is self-adjoint for all $\tau$. For $T>0$, the unique solution of the Cauchy problem for the non-autonomous Heisenberg equation
$$
\partial_t\gamma^t(A)=\gamma^t\circ\delta_{V(t/T)}(A),\qquad \gamma^s(A)=A\in\Dom(\delta),\qquad s,t\in[0,T],
$$
is given by $\gamma^t=\alpha_T^{s/T\to t/T}$, where $[0,1]\times[0,1]\ni(\sigma,\tau)\mapsto\alpha_T^{\sigma\to\tau}$ is a strongly continuous two-parameter family of $\ast$-automorphisms of $\cO$ satisfying the relation $\alpha_T^{\sigma\to\tau}\circ\alpha_T^{\tau\to\upsilon}=\alpha_T^{\sigma\to\upsilon}$ for any $\sigma,\tau,\upsilon\in[0,1]$. For $0\le\sigma\le\tau\le 1$, it has the norm-convergent expansion\footnote{It is understood that the zeroth term in this expansion is $\alpha^{(\tau-\sigma)T}(A)$. A similar expression holds for $0\le\tau\le\sigma\le1$.}
$$
\alpha_T^{\sigma\to\tau}(A)=
\sum_{n=0}^\infty\hskip6pt T^n\hskip-16pt\int\limits_{\sigma\leq\sigma_1\leq\cdots\leq\sigma_n\leq\tau}\hskip-14pt
\i[\alpha^{(\sigma_1-\sigma)T}(V(\sigma_1)),\i[\cdots,\i[\alpha^{(\sigma_n-\sigma)T}(V(\sigma_n)),\alpha^{(\tau-\sigma)T}(A)]\cdots]]\d\sigma_1\cdots\d\sigma_n.
$$

Recall the property of Return to Equilibrium reviewed in~\cite[Section~1.1]{Jaksic2023a}, where the reader can also find references to this well-studied topic. 

\begin{thm}\label{adl-thm} Suppose that the map~\eqref{map-ear} is $C^2$. If, for Lebesgue a.e.\;$\tau\in [0,1]$, the  quantum dynamical system $(\cO,\alpha_{V(\tau)},\omega_{V(\tau)})$ has the property of return to equilibrium, then 
\be
\lim_{T\to\infty}\,\sup_{\tau\in[0,1]}\|\omega_{V(0)}\circ\alpha_T^{0\to\tau}-\omega_{V(\tau)}\|=0.
\label{ln-fox}
\ee
\end{thm}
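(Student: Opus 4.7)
The strategy is to reduce the problem to a Duhamel integral representation for the difference $f_T(\tau) \coloneqq \omega_{V(0)}\circ\alpha_T^{0\to\tau} - \omega_{V(\tau)}$, and then exploit return to equilibrium as a Ces\`aro-type mean-ergodic mechanism. The key algebraic input is that KMS states are invariant under their own dynamics, so $\omega_{V(\tau)}\circ\delta_{V(\tau)} = 0$ for every $\tau\in[0,1]$.

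Differentiating and using the evolution equation $\partial_\tau\alpha_T^{0\to\tau} = T\alpha_T^{0\to\tau}\circ\delta_{V(\tau)}$, one obtains the inhomogeneous linear ODE
\begin{equation*}
\frac{d}{d\tau}f_T(\tau) = Tf_T(\tau)\circ\delta_{V(\tau)} - \dot\omega_{V(\tau)},\qquad f_T(0) = 0,
\end{equation*}
where $\dot\omega_{V(\tau)}$ denotes the derivative of the KMS-state family in the norm dual of $\cO$; its existence and uniform boundedness on $[0,1]$ follow from Araki perturbation theory and the $C^2$ assumption on~\eqref{map-ear}. Variation of constants then yields
\begin{equation*}
f_T(\tau) = -\int_0^\tau\dot\omega_{V(\sigma)}\circ\alpha_T^{\sigma\to\tau}\,d\sigma.
\end{equation*}

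To harvest the large-$T$ decay, for fixed $A\in\cO$ I rescale via $s = T(\tau-\sigma)$:
\begin{equation*}
f_T(\tau)(A) = -\frac{1}{T}\int_0^{T\tau}\dot\omega_{V(\tau-s/T)}\bigl(\alpha_T^{\tau-s/T\to\tau}(A)\bigr)\,ds.
\end{equation*}
A Dyson-expansion estimate gives $\|\alpha_T^{\tau-s/T\to\tau}(A) - \alpha_{V(\tau)}^s(A)\| = O(s^2/T)\|A\|$ uniformly in $\tau$, so up to vanishing errors the integrand equals $\dot\omega_{V(\tau)}(\alpha_{V(\tau)}^s(A))$. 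The crucial input now enters: since $\omega_{V(\tau+h)}$ is a normal perturbation of $\omega_{V(\tau)}$ in the GNS representation (Araki's structure theorem for local perturbations), the return-to-equilibrium hypothesis gives $\lim_{s\to\infty}\omega_{V(\tau+h)}(\alpha_{V(\tau)}^s(A)) = \omega_{V(\tau)}(A)$, and differentiating at $h=0$ yields
\begin{equation*}
\lim_{s\to\infty}\dot\omega_{V(\tau)}\bigl(\alpha_{V(\tau)}^s(A)\bigr) = 0
\end{equation*}
for a.e.\;$\tau$. Ces\`aro averaging then drives the rescaled integral to zero, giving pointwise-in-$\tau$ vanishing of $f_T(\tau)(A)$.

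The main technical obstacle is upgrading this pointwise convergence to the uniform statement~\eqref{ln-fox}: the Lipschitz constant of $\tau\mapsto f_T(\tau)$ grows like $T$, so Arzel\`a--Ascoli is unavailable. A natural workaround is to split $[0,1]$ into a small initial interval $[0,\epsilon]$, on which the trivial bound $\|f_T(\tau)\|\leq\tau\sup_\sigma\|\dot\omega_{V(\sigma)}\|$ is already uniformly small independently of $T$, and a complementary interval $[\epsilon,1]$ on which the Ces\`aro estimate is made uniform by cutting the $s$-integral at an auxiliary scale $M\ll T$, bounding the head by $O(M/T)$ via the uniform norm bound, and controlling the tail using a uniform-in-$\tau$ modulus of return to equilibrium together with norm continuity of $\tau\mapsto\omega_{V(\tau)}$. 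A subtler point is that return to equilibrium is only assumed on a set $\Omega\subset[0,1]$ of full measure; the modulus must be extracted on $\Omega$ and extended to $[0,1]$ by continuity, which is where I expect the technical work to concentrate.
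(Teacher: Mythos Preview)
The paper does not take your Duhamel/Ces\`aro route. It passes to the GNS representation of $\omega$, where the non-autonomous dynamics $\alpha_T^{0\to\tau}$ is unitarily implemented by the propagator of the standard Liouvillean $L_\tau=L+\pi(V(\tau))-J\pi(V(\tau))J$, and where Araki's perturbation theory supplies a $C^2$ family of cyclic vectors $\Omega_{V(\tau)}$ in the natural cone with $L_\tau\Omega_{V(\tau)}=0$. Return to equilibrium for $(\cO,\alpha_{V(\tau)},\omega_{V(\tau)})$ guarantees that $\ker L_\tau=\cc\,\Omega_{V(\tau)}$, so Assumption~{\bf(AD)} holds with the rank-one projection $P(\tau)$ onto $\Omega_{V(\tau)}$, and the Avron--Elgart Theorem~\ref{ae-thm} delivers~\eqref{ln-fox} directly in Hilbert-space norm, hence in state norm.

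Your argument, by contrast, has a gap that I do not see how to close at the $C^\ast$-level. The freezing estimate $\|\alpha_T^{\tau-s/T\to\tau}(A)-\alpha_{V(\tau)}^s(A)\|=O(s^2/T)$ is informative only for $s\ll\sqrt T$, whereas the integral runs up to $s=T\tau$; over most of the range the error is of the same size as the integrand itself, so the substitution ``up to vanishing errors'' is false as stated. The head/tail split does not rescue this: on the tail $s\in[M,T\tau]$ the only way to bring return to equilibrium to bear is to peel off a short initial stretch $\alpha_T^{\sigma\to\sigma+M/T}\approx\alpha_{V(\sigma)}^M$ and argue that $\dot\omega_{V(\sigma)}\circ\alpha_{V(\sigma)}^M$ is small as a functional. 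But return to equilibrium is only a \wstar{statement}: it gives $\dot\omega_{V(\sigma)}(\alpha_{V(\sigma)}^M(B))\to0$ for each \emph{fixed} $B$, and says nothing when applied to the $T$-dependent observable $B=\alpha_T^{\sigma+M/T\to\tau}(A)$. For the same reason, your inference $\lim_{s\to\infty}\dot\omega_{V(\tau)}(\alpha_{V(\tau)}^s(A))=0$ by ``differentiating at $h=0$'' is an unjustified interchange of limits. Passing to the GNS Hilbert space is precisely what converts the weak spectral hypothesis into the uniform norm control you need; the Kato/Avron--Elgart integration-by-parts mechanism then replaces your Ces\`aro averaging, and that is what makes the proof close.
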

This  result goes back to~\cite{AbouSalem2007a,Jaksic2020}; see also~\cite{Jaksic2014b}. The proof is a simple combination of the Avron--Elgart gapless adiabatic theorem~\ref{ae-thm} and  Araki's perturbation theory of the modular structure. 

For an in depth discussion of Theorem~\ref{adl-thm} and its thermodynamical implications we refer the reader to~\cite{Benoist2023}. Here we recall three results that will be of relevance in what follows; see also~\cite[Section 5]{Jaksic2014b} for a related discussion.

The first is the following entropy balance equation: 
\begin{prop}\label{prop-balance}
Suppose that the map~\eqref{map-ear} is $C^1$ and denote by $\dot V$ its derivative. Then\footnote{ $S(\,\cdot\,|\,\cdot\,)$ is the relative entropy functional, with the sign and ordering convention of~\cite[Section~1.1]{Jaksic2023a}.} 
\begin{equation*}
S\left(\omega_{V(0)}\circ\alpha_T^{0\to\tau}\bigg|\omega_{V(\tau)}\right)
=\beta\int_0^\tau\left(\omega_{V(0)}\circ\alpha_T^{0\to\sigma}-\omega_{V(\sigma)}\right)(\dot V(\sigma))\d\sigma
\end{equation*}
holds for all $\tau\in [0,1]$.
\end{prop}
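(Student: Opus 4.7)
The plan is to express the left-hand side as an antiderivative. I set $\eta(\tau) \coloneqq \omega_{V(0)}\circ\alpha_T^{0\to\tau}$ and
$$f(\tau) \coloneqq S\bigl(\eta(\tau)\bigm|\omega_{V(\tau)}\bigr),$$
observe that $f(0) = S(\omega_{V(0)}|\omega_{V(0)}) = 0$, and aim to establish the differential identity
$$f'(\tau) = \beta\bigl(\eta(\tau) - \omega_{V(\tau)}\bigr)(\dot V(\tau)).$$
Integration from $0$ to $\tau$ then yields the claimed formula.

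To compute $f'$, I would decompose it by the chain rule into a contribution from the $\tau$-dependence of $\eta$ and one from that of $\omega_{V(\tau)}$. For the first, the propagator satisfies
$$\partial_\tau\alpha_T^{0\to\tau} = T\,\alpha_T^{0\to\tau}\circ\delta_{V(\tau)},$$
so $\eta(\tau)$ is instantaneously pushed forward by the one-parameter $\ast$-automorphism group generated by $T\delta_{V(\tau)}$. Since the reference state $\omega_{V(\tau)}$ is $\alpha_{V(\tau)}$-invariant and the relative entropy is invariant under simultaneous pushforward by a $\ast$-automorphism of both its arguments, this contribution vanishes; this is the mechanism through which the large factor $T$ disappears. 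For the second contribution, I would invoke Araki's perturbation formula for KMS states: there exists a universal (i.e.\ $\mu$-independent) functional $p$ such that
$$S(\mu|\omega_V) = S(\mu|\omega) + \beta\mu(V) + p(V),$$
whose $\tau$-derivative is supplied by the Bogoliubov--Duhamel identity $\partial_\tau p(V(\tau)) = -\beta\,\omega_{V(\tau)}(\dot V(\tau))$. Differentiating the perturbation formula at fixed $\mu$ and then evaluating at $\mu = \eta(\tau)$ produces the desired expression for $f'(\tau)$.

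The main technical obstacle is to carry out these two differentiations rigorously in the $C^*$-algebraic setting: one needs Araki's perturbation theory of the modular structure to control the regularity of $V \mapsto \omega_V$ and of the relative pressure $p(V)$, and one must verify that the derivative is compatible with the norm-convergent Dyson-type expansion defining $\alpha_T^{0\to\tau}$. Under the $C^1$ regularity assumed for~\eqref{map-ear}, with $V(\sigma), \dot V(\sigma) \in \cO$ uniformly bounded in $\sigma$, the integrand $\sigma \mapsto (\eta(\sigma) - \omega_{V(\sigma)})(\dot V(\sigma))$ is norm-continuous, so the right-hand side is a well-defined Riemann integral and the fundamental theorem of calculus delivers the identity.
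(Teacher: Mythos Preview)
Your outline is sound and uses the same two analytic ingredients as the paper: Araki's perturbation formula $S(\mu|\omega_V)=S(\mu|\omega)+\beta\mu(V)+p(V)$ together with the Bogoliubov--Duhamel identity $\partial_\tau p(V(\tau))=-\beta\,\omega_{V(\tau)}(\dot V(\tau))$. These handle your second contribution exactly as the paper does.

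The delicate point is your first contribution. The heuristic---that the instantaneous push-forward of $\eta(\tau)$ is by $\alpha_{V(\tau)}$, under which $\omega_{V(\tau)}$ is invariant---is correct, but to turn it into a proof you must control
\[
S\bigl(\eta(\tau)\circ\alpha_T^{\tau\to\tau+h}\bigm|\omega_{V(\tau)}\bigr)-S\bigl(\eta(\tau)\bigm|\omega_{V(\tau)}\bigr),
\]
and relative entropy is only lower-semicontinuous in its first argument, so the chain-rule splitting is not free. The paper sidesteps differentiation of $f$ entirely: it establishes an \emph{exact} factorization $\alpha_T^{s\to t}=\gamma_T^{s\to t}\circ\alpha_{V(t)}^{(t-s)T}$ with $\gamma_T^{s\to t}$ an inner $\ast$-automorphism implemented by a unitary $\Gamma_T^{s\to t}\in\Dom(\delta)$ (Lemma~\ref{lem-cauchy}), uses the KMS-invariance of $\omega_{V(t)}$ once to drop the $\alpha_{V(t)}^{(t-s)T}$ factor, and then applies the entropy-production identity of~\cite{Jaksic2003},
\[
S(\nu\circ\gamma_T^{s\to t}\,|\,\omega_{V(t)})=S(\nu\,|\,\omega_{V(t)})-\i\beta\,\nu\bigl(\Gamma_T^{s\to t}\delta_{V(t)}(\Gamma_T^{s\to t\ast})\bigr),
\]
to write $f(\tau)$ directly as $S(\omega_{V(0)}\,|\,\omega_{V(\tau)})$ plus an explicit cocycle term. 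Both pieces are then computed as integrals over $[0,\tau]$.

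So the two strategies share all ingredients; the paper's integral organization trades your differentiability obligation for one algebraic lemma plus a known relative-entropy identity for inner perturbations. If you tried to make your chain-rule step rigorous, the natural route would lead you back to exactly these tools.
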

For the reader's convenience, a proof of this proposition is given in Section~\ref{sec-proof-of-prop:balance}. 

The entropy balance equation yields the estimate
\[
\sup_{\tau\in[0,1]}S\left(\omega_{V(0)}\circ\alpha_T^{0\to\tau}\bigg|\omega_{V(\tau)}\right)
\le\beta\sup_{\tau\in[0,1]}\|\omega_{V(0)}\circ\alpha_T^{0\to\tau}-\omega_{V(\tau)}\|
\,\sup_{\tau\in[0,1]}\|\dot V(\tau)\|,
\]
and the validity of~\eqref{ln-fox} implies  that the adiabatic theorem for local perturbations also holds in the entropic sense:
\begin{equation}
\lim_{T\to\infty}\,\sup_{\tau\in[0,1]}
S\left(\omega_{V(0)}\circ\alpha_T^{0\to\tau}\bigg|\omega_{V(\tau)}\right)=0.
\label{ad-relent}
\end{equation}
On the other hand, the Pinsker--Csiszàr inequality~\cite[Proposition~5.23]{Ohya1993}, $\|\omega-\nu\|^2\le2 S(\omega|\nu)$, gives 
that~\eqref{ad-relent} implies~\eqref{ln-fox}.  In summary: 
\begin{thm}\label{prop:so-so-t-0}
Suppose that the map~\eqref{map-ear} is $C^1$. Then, the following statements are equivalent:
\begin{enumerate}[label=(\roman*)]
\item 
\[
\lim_{T\to\infty}\,\sup_{\tau\in[0,1]}
S\left(\omega_{V(0)}\circ\alpha_T^{0\to\tau}\bigg|\omega_{V(\tau)}\right)=0.
\]
\item 
\[\lim_{T\to\infty}\,\sup_{\tau\in[0,1]}\|\omega_{V(0)}\circ\alpha_T^{0\to\tau}-\omega_{V(\tau)}\|=0.\]
\end{enumerate}
\end{thm}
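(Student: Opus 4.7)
The plan is to reduce the theorem to two results already stated above: the entropy balance equation of Proposition~\ref{prop-balance} and the Pinsker--Csisz\`ar inequality $\|\omega-\nu\|^2 \le 2S(\omega|\nu)$ recalled just before the statement. The $C^1$ hypothesis on the map~\eqref{map-ear} is invoked exactly to make Proposition~\ref{prop-balance} applicable and, via continuity of $\dot V$ on the compact interval $[0,1]$, to guarantee that $M := \sup_{\sigma\in[0,1]} \|\dot V(\sigma)\|$ is finite.

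For the implication (ii) $\Rightarrow$ (i), my approach is to dominate the integrand in Proposition~\ref{prop-balance} pointwise by $\|\omega_{V(0)}\circ\alpha_T^{0\to\sigma}-\omega_{V(\sigma)}\|\,\|\dot V(\sigma)\|$ using the duality pairing between states and observables, and then take a supremum inside the integral. This yields
\[
\sup_{\tau\in[0,1]} S\left(\omega_{V(0)}\circ\alpha_T^{0\to\tau}\bigg|\omega_{V(\tau)}\right)
\le \beta M \sup_{\sigma\in[0,1]} \|\omega_{V(0)}\circ\alpha_T^{0\to\sigma}-\omega_{V(\sigma)}\|,
\]
and assumption (ii) then forces the right-hand side to vanish as $T\to\infty$.

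The converse (i) $\Rightarrow$ (ii) is even more direct: I would apply Pinsker--Csisz\`ar with $\omega=\omega_{V(0)}\circ\alpha_T^{0\to\tau}$ and $\nu=\omega_{V(\tau)}$, take square roots, pass to the supremum over $\tau\in[0,1]$, and let $T\to\infty$, using (i) on the right-hand side to conclude.

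There is essentially no obstacle here: both the nontrivial identity (Proposition~\ref{prop-balance}) and the nontrivial inequality (Pinsker--Csisz\`ar) have already been brought in. This statement is largely a packaging result, recording that the norm and entropic formulations of the isothermal adiabatic theorem for local perturbations are equivalent under mild regularity of $V$. The only point demanding any care is ensuring that $M$ is finite, and this is immediate from the $C^1$ assumption together with compactness of $[0,1]$.
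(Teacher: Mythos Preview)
Your proposal is correct and follows the paper's own argument essentially verbatim: the paper derives the same estimate from Proposition~\ref{prop-balance} to get (ii)$\Rightarrow$(i), and invokes the Pinsker--Csisz\`ar inequality for (i)$\Rightarrow$(ii).
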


Starting with Proposition~\ref{prop-balance}, a similar argument yields the third result  which is also of independent interest. 
\begin{thm}\label{prop:so-so-t}
Suppose that the map~\eqref{map-ear} is $C^1$. Then, the following statements are equivalent:
\begin{enumerate}[label=(\roman*)]
\item For Lebesgue a.e.\;$\tau\in[0,1]$, 
\[
\lim_{T\to\infty}\omega_{V(0)}\circ\alpha_T^{0\to\tau}(\dot V(\tau))=\omega_{V(\tau)}(\dot V(\tau)).
\]
\item For all $\tau\in[0,1]$,
$$
\lim_{T\to\infty}S\left(\omega_{V(0)}\circ\alpha_T^{0\to\tau}\bigg|\omega_{V(\tau)}\right)=0.
$$
\item For all $\tau\in[0,1]$,
$$
\lim_{T\to\infty}\|\omega_{V(0)}\circ\alpha_T^{0\to\tau}-\omega_{V(\tau)}\|=0.
$$
\end{enumerate}
\end{thm}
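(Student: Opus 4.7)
The plan is to establish the cycle (iii)$\Rightarrow$(i)$\Rightarrow$(ii)$\Rightarrow$(iii).

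The implication (iii)$\Rightarrow$(i) is immediate: norm convergence of $\omega_{V(0)}\circ\alpha_T^{0\to\tau}$ to $\omega_{V(\tau)}$ at any fixed $\tau$ entails convergence of their values on the fixed observable $\dot V(\tau)\in\cO$, giving (i) for every $\tau$, a fortiori for almost every $\tau$. The implication (ii)$\Rightarrow$(iii) is, pointwise in $\tau$, a direct application of the Pinsker--Csisz\`ar inequality $\|\omega-\nu\|^2\le 2S(\omega|\nu)$, already invoked in the discussion preceding Theorem~\ref{prop:so-so-t-0}.

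The heart of the argument is (i)$\Rightarrow$(ii), where I would mirror the proof of Theorem~\ref{prop:so-so-t-0} but replace the uniform estimate by a dominated-convergence argument. Setting
$$f_T(\sigma)\coloneqq\bigl(\omega_{V(0)}\circ\alpha_T^{0\to\sigma}-\omega_{V(\sigma)}\bigr)(\dot V(\sigma)),$$
I have the crude bound $|f_T(\sigma)|\le 2\|\dot V(\sigma)\|$, and since $V$ is $C^1$ on the compact interval $[0,1]$, the function $\sigma\mapsto 2\|\dot V(\sigma)\|$ is bounded, hence integrable on $[0,1]$. Assumption (i) is precisely that $f_T(\sigma)\to 0$ as $T\to\infty$ for Lebesgue a.e.\;$\sigma\in[0,1]$. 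The dominated convergence theorem then yields
$$\lim_{T\to\infty}\int_0^\tau f_T(\sigma)\d\sigma=0$$
for every $\tau\in[0,1]$, and Proposition~\ref{prop-balance} identifies this limit with $\beta^{-1}\lim_{T\to\infty}S\bigl(\omega_{V(0)}\circ\alpha_T^{0\to\tau}\big|\omega_{V(\tau)}\bigr)$, which is (ii).

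I do not foresee a genuine technical obstacle: the $C^1$ hypothesis on $V$ is exactly what is needed to both legitimate Proposition~\ref{prop-balance} and provide the integrable dominating function for DCT. The only mildly subtle point worth flagging is the apparent asymmetry in the hypotheses---(i) demands convergence only almost everywhere while (ii) and (iii) are stated for every $\tau$---and the mechanism that absorbs this asymmetry is precisely the integration over $\sigma\in[0,\tau]$ in the balance equation, which is insensitive to the values of $f_T$ on a null set.
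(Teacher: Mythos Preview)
Your proposal is correct and follows exactly the route the paper intends: the paper merely states that ``starting with Proposition~\ref{prop-balance}, a similar argument yields'' Theorem~\ref{prop:so-so-t}, and your cycle (iii)$\Rightarrow$(i)$\Rightarrow$(ii)$\Rightarrow$(iii) via the entropy balance equation, dominated convergence, and the Pinsker--Csisz\`ar inequality is precisely that argument spelled out.
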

In Section~\ref{sec-main-re} we extend Theorem~\ref{prop:so-so-t} to translation invariant many body systems. 

\bigskip
\noindent{\bf The adiabatic theorem for gapped ground states of quantum spin systems.} A very general extension of Theorem~\ref{kato-thm} to many body quantum spin systems was established in~\cite{Bachmann2017,Bachmann2018}.  Since this result plays no role in our work, we will not discuss it further. The low temperature adiabatic theory of 
lattice fermion systems was studied in the recent work~\cite{Greenblatt2022}. The relation between  our work and  \cite{Greenblatt2022} remains to be studied further.

\section{Structural theory}
\label{sec-structural}

\subsection{Setup} 
\label{sec-setting}

As in~\cite{Jaksic2023a}, we will work with quantum spin systems on the lattice $\zz^d$ and set the inverse temperature to $\beta=1$. All the results and proofs directly extend to the fermionic case. The notation and terminology are the same as in~\cite{Jaksic2023a}; see Appendix~\ref{app:notations} below for a brief summary. 

Given an interaction $\Phi\in\cBr$, the associated local Hamiltonian in a finite cube $\Lambda$ is
\[
H_\Lambda(\Phi)=\sum_{X\subset\Lambda}\Phi(X).
\]
The $C^\ast$-dynamics it generates on $\fA$ is
$$
\alpha_{\Phi,\Lambda}^t(A)=\e^{\i tH_\Lambda(\Phi)}A\e^{-\i tH_\Lambda(\Phi)},
$$
and its thermodynamic limit
$$
\alpha_{\Phi}^t(A)=\lim_{\Lambda\uparrow\zz^d}\alpha_{\Phi,\Lambda}^t(A),
$$
yields a $C^\ast$-dynamics commuting with the natural action of the translation group $\zz^d\ni x\mapsto\varphi^x$.

We consider time-dependent translation invariant interactions described by continuous maps
\[
\Phi:[0,1]\ni\tau\mapsto\Phi_\tau\in\cBr.
\]
In analogy with Section~\ref{sec-old-ad-1}, the local time-dependent Hamiltonians  $H_\Lambda(\Phi_\tau)$ generate a non-autonomous $C^\ast$-dynamics $\alpha_{\Phi,\Lambda}^{\sigma\to\tau}$ on $\fA$, uniquely characterized by the Cauchy problem for the Heisenberg equation of motion
\[
\partial_\tau\alpha_{\Phi,\Lambda}^{\sigma\to\tau}(A)
=\alpha_{\Phi,\Lambda}^{\sigma\to\tau}(\i[H_\Lambda(\Phi_\tau),A]),
\qquad  
\alpha_{\Phi,\Lambda}^{\sigma\to\sigma}(A)=A\in\fA,\qquad\sigma,\tau\in[0,1].
\]
The following proposition summarizes the basic properties of its thermodynamic limit needed in this paper. The reader
should consult~\cite{Nachtergaele2019} for a more detailed and more general discussion of non-autonomous $C^\ast$-dynamics. 
\begin{prop}\label{prop:alpha}
Let $\Phi:[0,1]\to\cBr$ be a continuous, time-dependent, translation invariant interaction.
\begin{enumerate}[label=(\arabic*)]
\item For all $\sigma,\tau\in[0,1]$ and $A\in\fA$, the limit 
\be
\alpha_{\Phi}^{\sigma\to\tau}(A)=\lim_{\Lambda\uparrow\zz^d}\alpha_{\Phi,\Lambda}^{\sigma\to\tau}(A)
\label{equ:alphaTDlimit}
\ee
exists and defines a two-parameter family of $\ast$-auto\-morphisms of\/ $\fA$ such that
\begin{equation*}
\alpha_{\Phi}^{\sigma\to\tau}\circ\alpha_{\Phi}^{\tau\to\upsilon}=\alpha_{\Phi}^{\sigma\to\upsilon},
\qquad  
\alpha_{\Phi}^{\sigma\to\sigma}=\mathrm{Id},
\end{equation*}
for all $\sigma,\tau,\upsilon\in[0,1]$.
\item $\alpha_\Phi$ commutes with the group action of\/ $\zz^d$, {\sl i.e.,}
$$
\alpha_\Phi^{\sigma\to\tau}\circ\varphi^x=\varphi^x\circ\alpha_\Phi^{\sigma\to\tau},
$$
for all $\sigma,\tau\in[0,1]$ and $x\in\zz^d$.
\item The map $[0,1]\ni\sigma,\tau\mapsto\alpha_\Phi^{\sigma\to\tau}\in\mathrm{Aut}(\fA)$ is strongly continuous. Moreover, for all $\sigma,\tau\in{]}0,1{[}$ and any $A\in\fAloc$,
$$
\partial_\tau\alpha_{\Phi}^{\sigma\to\tau}(A)=\alpha_{\Phi}^{\sigma\to\tau}\circ\delta_{\Phi_\tau}(A),
$$
where $\delta_{\Phi_\tau}$ denotes the map defined by
$$
\delta_{\Phi_\tau}(A)=\sum_{X\cap\supp(A)\neq\emptyset}\i[\Phi_\tau(X),A].
$$
\item For $\sigma,\tau\in[0,1]$ and integer $N\ge1$, set $\xi_N=(\tau-\sigma)/N$ and $\upsilon_k=\sigma+k\xi_N$. Then, for $A\in\fA$, one has
$$
\alpha_\Phi^{\sigma\to\tau}(A)=\lim_{N\to\infty}
\alpha_{\Phi_{\upsilon_0}}^{\xi_N}\circ\cdots\circ\alpha_{\Phi_{\upsilon_{N-1}}}^{\xi_N}(A).
$$
\end{enumerate}
\end{prop}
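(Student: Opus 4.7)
The plan is to combine the autonomous theory for interactions in $\cBr$ with a Duhamel expansion in the non-autonomous setting. The key observation is that compactness of $[0,1]$ together with continuity of $\Phi$ yield $M:=\sup_{\tau\in[0,1]}\|\Phi_\tau\|_{\cBr}<\infty$, and hence a uniform-in-$(\tau,\Lambda)$ Lieb--Robinson bound for the autonomous dynamics $\alpha_{\Phi_\tau,\Lambda}^t$ with some velocity $v_\Phi<\infty$. The central tool will be the Heisenberg-picture Duhamel identity
$$
\alpha^{\sigma\to\tau}(A) - \tilde\alpha^{\sigma\to\tau}(A)
= \int_\sigma^\tau \tilde\alpha^{\sigma\to s}\bigl((\delta_s-\tilde\delta_s)(\alpha^{s\to\tau}(A))\bigr)\,ds
$$
comparing two non-autonomous dynamics with generators $\delta_s,\tilde\delta_s$, valid for $A\in\fAloc$ at finite volume; the framework of~\cite{Nachtergaele2019} provides the general technical backbone.

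\textbf{Parts (1)--(3).} For part (1), I would apply the Duhamel identity with $\alpha=\alpha_{\Phi,\Lambda_2}$ and $\tilde\alpha=\alpha_{\Phi,\Lambda_1}$, $\Lambda_1\subset\Lambda_2$. The generator difference is the derivation built from interaction terms $\Phi_s(X)$ with $X\subset\Lambda_2$, $X\not\subset\Lambda_1$; when commuted with $\alpha_{\Phi,\Lambda_2}^{s\to\tau}(A)$ for $A\in\fAloc$, the uniform Lieb--Robinson bound confines this observable (up to exponentially small tails) to a $v_\Phi(\tau-s)$-neighborhood of $\supp A$, and together with the $\cBr$-decay of $\Phi_s$ this yields a bound on $\|\alpha_{\Phi,\Lambda_2}^{\sigma\to\tau}(A)-\alpha_{\Phi,\Lambda_1}^{\sigma\to\tau}(A)\|$ that vanishes as $\Lambda_1\uparrow\zz^d$, uniformly in $(\sigma,\tau)\in[0,1]^2$. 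Cauchyness on $\fAloc$, combined with uniform unitarity and density, would extend the limit~\eqref{equ:alphaTDlimit} to all of $\fA$ and ensure $\alpha_\Phi^{\sigma\to\tau}\in\mathrm{Aut}(\fA)$; the group law passes from finite volume. Part (2) is immediate from $\varphi^x\circ\alpha_{\Phi,\Lambda}^{\sigma\to\tau}=\alpha_{\Phi,\Lambda+x}^{\sigma\to\tau}\circ\varphi^x$ at finite volume. For (3), joint strong continuity would be derived from the uniform-in-$\Lambda$ Lipschitz bound $\|\alpha_{\Phi,\Lambda}^{\sigma'\to\tau'}(A)-\alpha_{\Phi,\Lambda}^{\sigma\to\tau}(A)\|\le(|\sigma-\sigma'|+|\tau-\tau'|)\sup_{s,\Lambda}\|\delta_{\Phi_s,\Lambda}(A)\|$ on $\fAloc$, then extended by isometry and density to $\fA$; and since $\delta_{\Phi_s,\Lambda}(A)\to\delta_{\Phi_s}(A)$ in norm for $A\in\fAloc$ (the defining sum is absolutely convergent thanks to the $\cBr$-norm), passing the integrated finite-volume Heisenberg equation to the thermodynamic limit and invoking continuity of the integrand would give the stated differential equation.

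\textbf{Part (4) and the main obstacle.} By uniqueness, the right-hand side of (4) is precisely the non-autonomous dynamics $\alpha_{\Phi^{(N)}}^{\sigma\to\tau}(A)$ associated to the piecewise-constant interaction $\Phi^{(N)}_s:=\Phi_{\upsilon_k}$ on $[\upsilon_k,\upsilon_{k+1})$. Applying Duhamel with $(\alpha,\tilde\alpha)=(\alpha_\Phi,\alpha_{\Phi^{(N)}})$ gives
$$
\alpha_\Phi^{\sigma\to\tau}(A) - \alpha_{\Phi^{(N)}}^{\sigma\to\tau}(A)
= \int_\sigma^\tau \alpha_{\Phi^{(N)}}^{\sigma\to s}\bigl(\delta_{\Phi_s-\Phi^{(N)}_s}(\alpha_\Phi^{s\to\tau}(A))\bigr)\,ds,
$$
while uniform continuity of $\Phi$ yields $\sup_s\|\Phi_s-\Phi^{(N)}_s\|_{\cBr}\le\omega_\Phi(1/N)\to 0$, where $\omega_\Phi$ denotes the modulus of continuity of $\Phi$. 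The \emph{main obstacle} will be that $\alpha_\Phi^{s\to\tau}(A)$ is only quasi-local, so the formal series defining $\delta_{\Phi_s-\Phi^{(N)}_s}$ applied to it needs separate justification. The remedy is to approximate the evolved observable, via the uniform Lieb--Robinson bound, by an observable supported in a ball of radius $R$ around $\supp A$ with error that decays faster than any polynomial in $R-v_\Phi(\tau-s)$; on such an approximant the commutator is bounded by $C_A R^d\omega_\Phi(1/N)$ and, after optimizing $R$ (say $R\sim\log N$), the integral tends to $0$ uniformly in $(\sigma,\tau)$. Density and isometry then promote the statement from $\fAloc$ to all of $\fA$.
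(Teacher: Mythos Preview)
Your strategy via Duhamel and Lieb--Robinson bounds is the modern approach and would go through for interactions with suitable distance decay (the $F$-norm conditions of~\cite{Nachtergaele2019}). There is, however, a genuine gap for the class $\cBr$ used here: the norm $\|\Phi\|_r=\sum_{X\ni0}\e^{r(|X|-1)}\|\Phi(X)\|$ controls decay in the \emph{cardinality} $|X|$ of interaction supports, not in their \emph{diameter}. For instance, a translation-invariant two-body interaction on $\zz^2$ with $\|\Phi(\{x,x+n e_1\})\|=n^{-2}$ and $\Phi(X)=0$ otherwise lies in $\cBr$, yet any admissible $F$ must satisfy $F(n)\ge c\,n^{-2}$, whence $\sum_{y\in\zz^2}F(|y|)=\infty$ and no finite Lieb--Robinson velocity is available. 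Without quasi-locality of $\alpha_{\Phi,\Lambda_2}^{s\to\tau}(A)$, your Cauchy estimate in Part~(1) does not close, and the same obstruction hits your Part~(4), where you need to control $\delta_{\Phi_s-\Phi_s^{(N)}}$ on the genuinely non-local $\alpha_\Phi^{s\to\tau}(A)$. Your claimed ``uniform velocity $v_\Phi<\infty$'' is therefore an unjustified assumption in this setting.

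The paper sidesteps this entirely by working with the norm-convergent Dyson expansion and the combinatorial estimate
$$
\|\delta_{\Phi_{\tau_1}}\circ\cdots\circ\delta_{\Phi_{\tau_n}}(A)\|\le\|A\|\,\e^{r|\supp A|}\left(\frac{2\|\Phi\|_r}{r}\right)^n n!
$$
from~\cite{Israel1979}, which is tailored to $\cBr$ and yields the thermodynamic limit for $|\tau-\sigma|<r/(2\|\Phi\|_r)$; the full range of $(\sigma,\tau)$ is then reached by iterating the cocycle identity. For Part~(4) the paper never estimates a derivation on an evolved observable: it telescopes the Dyson series directly to obtain a bound of the form $\|\alpha_\Phi^{\sigma\to\tau}(A)-\alpha_\Psi^{\sigma\to\tau}(A)\|\le C_A\|\Phi-\Psi\|_r$ for small $|\tau-\sigma|$ (again extended via the cocycle), and then observes that the piecewise-constant $\Phi^{(N)}$ converges to $\Phi$ in $\cBr$. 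No quasi-locality of the evolved observable is ever invoked.
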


A proof is given in Section~\ref{sec-proof-of-prop:alpha}. 
The family $\{\alpha_{\Phi}^{\sigma\to\tau}\}_{\sigma,\tau\in[0,1]}$ defines the non-autonomous $C^\ast$-dynamics generated by the time-dependent interaction $\Phi$. 

The basic result on  the constancy of  specific entropy along the state trajectory, \cite[Theorem~5]{Lanford1968}, extends to time-dependent interactions. 

\begin{prop}\label{LR-con-1}
For any $\nu\in\cSI(\fA)$ and $\sigma,\tau\in[0,1]$,
\[
s(\nu)=s(\nu\circ\alpha_{\Phi}^{\sigma\to\tau}).
\]
\end{prop}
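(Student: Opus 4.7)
The plan is to reduce the non-autonomous statement to the classical Lanford--Robinson result for time-independent translation-invariant interactions, which yields $s(\mu \circ \alpha_{\Psi}^t) = s(\mu)$ for every $\mu \in \cSI(\fA)$, every $\Psi \in \cBr$, and every $t \in \rr$. The bridge from that result to the current one is provided by the Trotter-type approximation in Proposition~\ref{prop:alpha}(4), together with the affineness and upper semicontinuity of the specific entropy functional $s$ on the compact convex set $\cSI(\fA)$ in the weak-$\ast$ topology (both of which are standard structural facts about $s$, used already in~\cite{Jaksic2023a}).

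First I would fix $\sigma, \tau \in [0,1]$ and, for each $N \ge 1$, set $\xi_N = (\tau-\sigma)/N$, $\upsilon_k = \sigma + k\xi_N$, and define
\[
\nu_N \coloneqq \nu \circ \alpha_{\Phi_{\upsilon_0}}^{\xi_N} \circ \cdots \circ \alpha_{\Phi_{\upsilon_{N-1}}}^{\xi_N}.
\]
Each factor $\alpha_{\Phi_{\upsilon_k}}^{\xi_N}$ is the autonomous $C^\ast$-dynamics generated by a fixed element of $\cBr$, hence commutes with every translation $\varphi^x$. It follows inductively that $\nu_N \in \cSI(\fA)$, and applying~\cite[Theorem~5]{Lanford1968} $N$ times we obtain $s(\nu_N) = s(\nu)$ for every $N$. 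By Proposition~\ref{prop:alpha}(4), $\nu_N(A) \to \nu \circ \alpha_\Phi^{\sigma \to \tau}(A)$ for every $A \in \fA$, i.e.\ $\nu_N \to \nu \circ \alpha_\Phi^{\sigma \to \tau}$ in the weak-$\ast$ topology; the limit again lies in $\cSI(\fA)$ by Proposition~\ref{prop:alpha}(2).

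Upper semicontinuity of $s$ on $\cSI(\fA)$ then gives
\[
s(\nu \circ \alpha_\Phi^{\sigma \to \tau}) \;\ge\; \limsup_{N\to\infty} s(\nu_N) \;=\; s(\nu).
\]
To obtain the reverse inequality, I would apply exactly the same reasoning with the roles of $\sigma$ and $\tau$ exchanged, which yields $s(\mu \circ \alpha_\Phi^{\tau \to \sigma}) \ge s(\mu)$ for every $\mu \in \cSI(\fA)$. Taking $\mu = \nu \circ \alpha_\Phi^{\sigma \to \tau}$ and using $\alpha_\Phi^{\sigma \to \tau} \circ \alpha_\Phi^{\tau \to \sigma} = \mathrm{Id}$ from Proposition~\ref{prop:alpha}(1), this reads $s(\nu) \ge s(\nu \circ \alpha_\Phi^{\sigma \to \tau})$, and the two inequalities combine to give the stated equality.

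The only place where care is really needed is the simultaneous verification, at each step of the Trotter product, that (a) translation invariance is preserved so that $s(\nu_N)$ is defined, and (b) the Lanford--Robinson invariance applies to the fixed interaction $\Phi_{\upsilon_k} \in \cBr$; both are immediate from the translation covariance of the autonomous dynamics and the fact that $\Phi$ takes values in $\cBr$. Upper semicontinuity is used only in one direction, and the loss it entails is compensated by invoking the argument on the inverse automorphism.
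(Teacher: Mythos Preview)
Your proof is correct and follows essentially the same route as the paper: both arguments combine the Trotter-type approximation of Proposition~\ref{prop:alpha}(4) with the Lanford--Robinson invariance result and upper-semicontinuity of $s$ to obtain one inequality, and then invoke reversibility (the inverse automorphism $\alpha_\Phi^{\tau\to\sigma}$) to get the other. The only difference is cosmetic: the paper states the reversibility reduction up front, while you spell it out explicitly at the end.
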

\begin{proof} Invoking reversibility, it suffices to prove that, for all $\sigma,\tau\in[0,1]$, one has 
\begin{equation}
s(\nu)\leq s(\nu\circ\alpha_{\Phi}^{\sigma\to\tau}).
\label{pa-tr}
\end{equation}
Part~(4) of Proposition~\ref{prop:alpha} gives
$$
s(\nu\circ\alpha_{\Phi}^{\sigma\to\tau})
=s\left(\wstarlim_{N\to\infty}
\nu\circ\alpha_{\Phi_{\upsilon_0}}^{\xi_N}\circ\cdots\circ\alpha_{\Phi_{\upsilon_{N-1}}}^{\xi_N}\right),
$$
and by~\cite[Theorem~5]{Lanford1968}, we have
$$
s(\nu\circ\alpha_{\Phi_{\upsilon_0}}^{\xi_N}\circ\cdots\circ\alpha_{\Phi_{\upsilon_{N-1}}}^{\xi_N})=s(\nu).
$$
The upper-semicontinuity of specific entropy yields~\eqref{pa-tr}. 
\end{proof}

\subsection{Main results}
\label{sec-main-re}

In this section we associate to a time-dependent interaction $\Psi$ a family of instantaneous equilibrium states. More precisely, we will denote by $\lb\nu,\Psi\rb$ a continuous map
\[
[0,1]\ni\tau\mapsto(\nu_\tau,\Psi_\tau)\in\cSI(\fA)\times\cBr
\]
such that $\nu_\tau\in\cSeq(\Psi_\tau)$ for all $\tau\in[0,1]$. We will consider the non-autonomous time evolution on $\fA$ 
defined by the Cauchy problem
$$
\partial_t\gamma^t(A)=\gamma^t\circ\delta_{\Psi_{t/T}}(A),\qquad \gamma^s(A)=A\in\fAloc,\qquad s,t\in[0,T],
$$
in the adiabatic limit $T\to\infty$. A rescaling of the time variables $s$ and $t$ gives
$$
\gamma^{t}=\alpha_{T\Psi}^{s/T\to t/T},
$$ 
where $\{\alpha_{T\Psi}^{\sigma\to\tau}\}_{\sigma,\tau\in[0,1]}$ is the two-parameter family of $\ast$-automorphisms generated by the re\-scaled interaction $T\Psi$, as described in Proposition~\ref{prop:alpha}.

\begin{defn}\label{def:adia}
\hspace{0em}
\begin{enumerate}[(1)]
\item A regular pair $(\nu,\Psi)\in\cSI(\fA)\times\cBr$ is {\sl uniquely regular\/} whenever $\cSeq(\Psi)=\{\nu\}$. A path $\lb\nu,\Psi\rb$ is uniquely regular whenever $(\nu_\tau,\Psi_\tau)$ is uniquely regular for all $\tau\in[0,1]$.

\item We say that $\lb\nu,\Psi\rb$ satisfies the \emph{adiabatic theorem} if
\[
\wstarlim_{T\to\infty}\nu_0\circ\alpha_{T\Psi}^{0\to1}=\nu_1,
\]
and the \emph{path adiabatic theorem} if, for all $\tau\in[0,1]$, 
\[
\wstarlim_{T\to\infty}\nu_0\circ\alpha_{T\Psi}^{0\to\tau}=\nu_\tau.
\]
\item  We say that $\lb\nu,\Psi\rb$ satisfies the \emph{entropic adiabatic theorem} if
\[
\lim_{T\to\infty}s(\nu_0\circ\alpha_{T\Psi}^{0\to1}|\nu_1)=0,
\]
and the \emph{entropic path adiabatic theorem} if, for all $\tau\in[0,1]$, 
\[
\lim_{T\to\infty}s(\nu_0\circ\alpha_{T\Psi}^{0\to\tau}|\nu_\tau)=0.
\]
\end{enumerate}
\end{defn}

In the fermionic case, gauge-invariance implies that the adiabatic theorem can hold only if $\nu_0(E_N)=\nu_1(E_N)$. Similarly, the path adiabatic theorem can hold only if $\nu_0(E_N)=\nu_\tau(E_N)$ for all $\tau\in[0,1]$. In what follows we will discuss only the quantum spin case, and leave the elementary  reformulations needed to accommodate the fermionic setting to the interested reader.

\begin{prop}\label{prop:nu0} Suppose that  $(\nu_1,\Psi_1)$ is a uniquely regular pair. Then the following statements are equivalent:
\begin{enumerate}[label=(\roman*)]
\itemsep0em
\item $s(\nu_0)=s(\nu_1)$ and the adiabatic theorem holds for $\lb\nu,\Psi\rb$. 
\item $s(\nu_0)=s(\nu_1)$ and $\displaystyle\lim_{T\to\infty}\nu_0\circ\alpha_{T\Psi}^{0\to1}(E_{\Psi_1})=\nu_1(E_{\Psi_1})$.
\item The entropic adiabatic theorem holds for $\lb\nu,\Psi\rb$.
\end{enumerate}
\end{prop}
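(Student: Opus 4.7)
The strategy is to reduce the three statements to a single arithmetic identity. Write $\mu_T = \nu_0 \circ \alpha_{T\Psi}^{0\to1}$, which lies in $\cSI(\fA)$ by Proposition~\ref{prop:alpha}(2). Three inputs combine: Proposition~\ref{LR-con-1} gives $s(\mu_T) = s(\nu_0)$ for every $T$; regularity of $(\nu_1,\Psi_1)$ applied in \eqref{ad-wg} yields $s(\mu_T|\nu_1) = -s(\mu_T) + \mu_T(E_{\Psi_1}) + P(\Psi_1)$; and since $\nu_1\in\cSeq(\Psi_1)$, the Gibbs variational principle in its equality form gives $P(\Psi_1) = s(\nu_1) - \nu_1(E_{\Psi_1})$. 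Combining these produces the key identity
\begin{equation*}
s(\mu_T|\nu_1) = \bigl(s(\nu_1) - s(\nu_0)\bigr) + \bigl(\mu_T(E_{\Psi_1}) - \nu_1(E_{\Psi_1})\bigr).
\end{equation*}

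Once this identity is in hand, the implications (i) $\Rightarrow$ (ii) and (ii) $\Rightarrow$ (iii) are immediate: the former is trivial (weak-* convergence transported to $E_{\Psi_1}\in\cBr$), and the latter is read off directly from the right-hand side as $T\to\infty$.

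The remaining implication (iii) $\Rightarrow$ (i) is where the work lies, and I would handle it through weak-* compactness. Any subsequence of $\{\mu_T\}$ admits a further subsequential weak-* limit $\mu_\infty\in\cSI(\fA)$. The identity above, combined with the upper semi-continuity of $s(\cdot)$ and the weak-* continuity of $\mu\mapsto\mu(E_{\Psi_1})$ on $\cSI(\fA)$ (for $\Psi_1\in\cBr$), shows that $\mu\mapsto s(\mu|\nu_1)$ is weak-* lower semi-continuous on $\cSI(\fA)$. Hence $s(\mu_\infty|\nu_1)\le\liminf_T s(\mu_T|\nu_1)=0$, so $s(\mu_\infty|\nu_1)=0$. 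Regularity of $(\nu_1,\Psi_1)$ combined with the Gibbs variational principle then forces $\mu_\infty\in\cSeq(\Psi_1)$, and unique regularity collapses this set to $\{\nu_1\}$, giving $\mu_\infty=\nu_1$. Since every subsequential weak-* limit of $\{\mu_T\}$ equals $\nu_1$, one concludes that $\mu_T\to\nu_1$ weak-*; feeding $\mu_T(E_{\Psi_1})\to\nu_1(E_{\Psi_1})$ back into the identity (whose left-hand side tends to $0$) yields the missing piece $s(\nu_0)=s(\nu_1)$.

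The main structural point, rather than any computational step, is the characterization $\{\mu\in\cSI(\fA):s(\mu|\nu_1)=0\}=\cSeq(\Psi_1)$: this is precisely what the regularity of the pair $(\nu_1,\Psi_1)$ delivers, and the \emph{unique} regularity hypothesis is exactly what converts a lower-semi-continuity bound on subsequential weak-* limits into convergence of the whole family $\{\mu_T\}$ to $\nu_1$. Everything else is a bookkeeping combination of \eqref{ad-wg}, Proposition~\ref{LR-con-1}, and the variational principle.
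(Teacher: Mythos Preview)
Your proof is correct and follows essentially the same route as the paper's: both use the regularity formula~\eqref{ad-wg} together with Proposition~\ref{LR-con-1} and the Gibbs variational principle to reduce (ii)$\Leftrightarrow$(iii) to arithmetic, and both handle (iii)$\Rightarrow$(i) via \wstar{compactness} and lower semicontinuity of $\mu\mapsto s(\mu|\nu_1)$, invoking unique regularity to pin down the limit. One minor slip: you write $E_{\Psi_1}\in\cBr$, but $E_{\Psi_1}$ is an element of $\fA$ (it is $\Psi_1$ that lies in $\cBr$); the argument is unaffected since \wstar{convergence} is tested against elements of $\fA$.
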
\begin{proof}

Note that $\nu_0\circ\alpha_{T\Psi}^{0\to1}\in\cSI(\fA)$ by Part~(2) of Proposition~\ref{prop:alpha}.
	
(i)$\Rightarrow$(ii). Follows directly from Definition~\ref{def:adia}.
 
\medskip\noindent(ii)$\Rightarrow$(iii). The regularity of $(\nu_1,\Psi_1)$ gives
\be
s(\nu_0\circ\alpha_{T\Psi}^{0\to1}|\nu_1)=-s(\nu_0\circ\alpha_{T\Psi}^{0\to1})+\nu_0\circ\alpha_{T\Psi}^{0\to1}(E_{\Psi_1})+P(\Psi_1).
\label{zima-1}
\ee
The assumption  $s(\nu_0)=s(\nu_1)$ and  Proposition~\ref{LR-con-1} yield that  $s(\nu_0\circ\alpha_{T\Psi}^{0\to1})=s(\nu_1)$, and so
$$
\lim_{T\to\infty}s(\nu_0\circ\alpha_{T\Psi}^{0\to1}|\nu_1)=-s(\nu_1)+\nu_1(E_{\Psi_1})+P(\Psi_1)=0,
$$
where we used that $\nu_1\in\cSeq(\Psi_1)$.

\medskip\noindent(iii)$\Rightarrow$(i). Let $\nu_+$ be a \wstar{limit} point of the net $(\nu_0\circ\alpha_{T\Psi}^{0\to1})_{T>0}$ as $T\to\infty$. The regularity of $(\nu_1,\Psi_1)$ implies that the map $\cSI(\fA)\ni\omega\mapsto s(\omega|\nu_1)$ is lower-semicontinuous, and so
\[
0\le s(\nu_+|\nu_1)\le\liminf_{T\to\infty}s(\nu_0\circ\alpha_{T\Psi}^{0\to1}|\nu_1)=0.
\]
Since $\nu_+\in\cSI(\fA)$, we deduce from
$$
0=s(\nu_+|\nu_1)=-s(\nu_+)+\nu_+(E_{\Psi_1})+P(\Psi_1)
$$
that $\nu_+\in\cSeq(\Psi_1)=\{\nu_1\}$. It follows that $\nu_0\circ\alpha_{T\Psi}^{0\to1}\rightharpoonup\nu_1$ as $T\to\infty$.  

To prove that 
$s(\nu_0)=s(\nu_1)$, note that the formula~\eqref{zima-1} and Proposition~\ref{LR-con-1} yield 
\[s(\nu_0\circ\alpha_{T\Psi}^{0\to1}|\nu_1)=-s(\nu_0)+\nu_0\circ\alpha_{T\Psi}^{0\to1}(E_{\Psi_1})+P(\Psi_1).\]
Hence, 
\[0=\lim_{T\rightarrow \infty} s(\nu_0\circ\alpha_{T\Psi}^{0\to1}|\nu_1)=-s(\nu_0)+ \nu_1(E_{\Psi_1})+P(\Psi_1)=s(\nu_1)-s(\nu_0),
\]
where we used that $\nu_1\in\cSeq(\Psi_1)$.
\end{proof}

An immediate consequence of the last result is:
\begin{thm}\label{si-2}
Suppose that the path $\lb\nu,\Psi\rb$ is uniquely regular. Then  the following statements are equivalent:
\begin{enumerate}[label=(\roman*)]
\itemsep0em
\item $s(\nu_0)=s(\nu_\tau)$ for all $\tau\in[0,1]$ and the path adiabatic theorem holds for $\lb\nu,\Psi\rb$. 
\item $s(\nu_0)=s(\nu_\tau)$  and $\displaystyle\lim_{T\to\infty}\nu_0\circ\alpha_{T\Psi}^{0\to\tau}(E_{\Psi_\tau})=\nu_\tau(E_{\Psi_\tau})$ for all $\tau\in[0,1]$.
\item The entropic path adiabatic theorem holds for $\lb\nu,\Psi\rb$.
\end{enumerate}
\end{thm}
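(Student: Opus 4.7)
The plan is to reduce Theorem~\ref{si-2} directly to Proposition~\ref{prop:nu0} by applying the endpoint equivalence pointwise in $\tau$. Inspecting the proof of Proposition~\ref{prop:nu0}, I see that it never uses anything specific to the endpoint $\tau=1$: the essential ingredients are only (a) translation invariance of $\nu_0\circ\alpha_{T\Psi}^{0\to 1}$ from Proposition~\ref{prop:alpha}(2); (b) the Lanford--Robinson conservation $s(\nu_0\circ\alpha_{T\Psi}^{0\to 1})=s(\nu_0)$ supplied by Proposition~\ref{LR-con-1}; and (c) unique regularity of the terminal pair $(\nu_1,\Psi_1)$, which underlies the variational identity~\eqref{zima-1} and the fact that $\cSeq(\Psi_1)=\{\nu_1\}$ identifies all weak-$\ast$ limit points. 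Under the hypotheses of Theorem~\ref{si-2} these three ingredients remain in force for \emph{every} $\tau\in[0,1]$, because unique regularity is postulated pointwise along the path.

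To turn this observation into a clean deduction, I would fix $\tau_0\in{]}0,1]$ and introduce the reparametrized path $\tilde\Psi_s\coloneqq\Psi_{s\tau_0}$, $\tilde\nu_s\coloneqq\nu_{s\tau_0}$ on $[0,1]$. A change of variables in the Cauchy problem characterizing $\alpha_{T\Psi}^{\sigma\to\tau}$ (uniqueness of solutions, or equivalently the Trotter formula in Proposition~\ref{prop:alpha}(4)) yields the identity
\[
\alpha_{T\Psi}^{0\to\tau_0}=\alpha_{T'\tilde\Psi}^{0\to 1},\qquad T'=\tau_0 T,
\]
and $T\to\infty$ is equivalent to $T'\to\infty$. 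Since $(\tilde\nu_1,\tilde\Psi_1)=(\nu_{\tau_0},\Psi_{\tau_0})$ is uniquely regular, applying Proposition~\ref{prop:nu0} to the reparametrized path $\lb\tilde\nu,\tilde\Psi\rb$ yields, at this particular $\tau_0$, the equivalence of the three conditions appearing in (i)--(iii) of Theorem~\ref{si-2}. Quantifying over $\tau_0\in{]}0,1]$ (with $\tau_0=0$ being trivial) delivers the three path-statements and their equivalence.

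There is no real obstacle; the argument is, as the author announces, an immediate consequence of Proposition~\ref{prop:nu0}. The only point that deserves explicit verification is the time-rescaling identity above, which is purely kinematical and follows from the standard uniqueness argument for non-autonomous $C^\ast$-dynamics already used in Proposition~\ref{prop:alpha}.
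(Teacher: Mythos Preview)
Your proposal is correct and matches the paper's intent: the authors state Theorem~\ref{si-2} as ``an immediate consequence'' of Proposition~\ref{prop:nu0} and give no separate proof, so your pointwise reduction is exactly what they have in mind. Your reparametrization $\tilde\Psi_s=\Psi_{s\tau_0}$ with $T'=\tau_0 T$ is a clean way to invoke Proposition~\ref{prop:nu0} as a black box, though---as you yourself note in your first paragraph---one could equally well just observe that nothing in the proof of Proposition~\ref{prop:nu0} distinguishes the endpoint $\tau=1$ from any other $\tau\in{]}0,1]$, and skip the rescaling altogether.
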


Our next result is: 

\begin{thm}\label{prop-sun}
Suppose that the path $\lb\nu,\Psi\rb$ is uniquely regular. If $\Psi\in C^1([0,1],\cBr)$, then the following statements are equivalent:
\begin{enumerate}[label=(\roman*)]
\itemsep0em
\item For Lebesgue a.e.\;$\tau\in[0,1]$,
$$
\lim_{T\to\infty}\nu_0\circ\alpha_{T\Psi}^{0\to\tau}(\partial_\tau E_{\Psi_\tau})
=\nu_\tau(\partial_\tau E_{\Psi_\tau}).
$$
\item The entropic path adiabatic theorem holds for $\lb\nu,\Psi\rb$.
\end{enumerate}	
\end{thm}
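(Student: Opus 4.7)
The plan is to establish a translation-invariant entropy balance equation analogous to Proposition~\ref{prop-balance}, namely
\[
s(\nu_0\circ\alpha_{T\Psi}^{0\to\tau}|\nu_\tau)
=\int_0^\tau\left(\nu_0\circ\alpha_{T\Psi}^{0\to\sigma}-\nu_\sigma\right)(\partial_\sigma E_{\Psi_\sigma})\,\mathrm{d}\sigma,
\qquad\tau\in[0,1],
\]
and then to mimic the proof of Theorem~\ref{prop:so-so-t}. To derive the identity, I would combine the regularity formula~\eqref{ad-wg} with Proposition~\ref{LR-con-1} to obtain
\[
s(\nu_0\circ\alpha_{T\Psi}^{0\to\tau}|\nu_\tau)=-s(\nu_0)+\nu_0\circ\alpha_{T\Psi}^{0\to\tau}(E_{\Psi_\tau})+P(\Psi_\tau),
\]
which vanishes at $\tau=0$ since $\nu_0\in\cSeq(\Psi_0)$. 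Unique regularity forces the Gibbs variational functional $P$ to be Gateaux differentiable at each $\Psi_\tau$ with differential $-\nu_\tau(E_{\,\cdot\,})$, so composing with the $C^1$ curve $\tau\mapsto\Psi_\tau$ yields $\frac{\mathrm{d}}{\mathrm{d}\tau}P(\Psi_\tau)=-\nu_\tau(\partial_\tau E_{\Psi_\tau})$. At finite volume the tautology $\i[H_\Lambda(\Psi_\tau),H_\Lambda(\Psi_\tau)]=0$ immediately gives
\[
\nu_0(\alpha_{T\Psi,\Lambda}^{0\to\tau}(H_\Lambda(\Psi_\tau)))-\nu_0(H_\Lambda(\Psi_0))=\int_0^\tau\nu_0\circ\alpha_{T\Psi,\Lambda}^{0\to\sigma}(\partial_\sigma H_\Lambda(\Psi_\sigma))\,\mathrm{d}\sigma,
\]
and dividing by $|\Lambda|$ and letting $\Lambda\uparrow\zz^d$ via Proposition~\ref{prop:alpha} produces the balance equation.

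With this identity in hand, (i)$\Rightarrow$(ii) is dominated convergence applied to the integrand, which is uniformly bounded in $T$ and $\sigma$ by $2\sup_\sigma\|\partial_\sigma E_{\Psi_\sigma}\|<\infty$, thanks to the $C^1$ hypothesis on $\Psi$. For (ii)$\Rightarrow$(i), I would fix $\tau\in[0,1]$ and let $\nu_+$ be a weak-$\ast$ limit point of $(\nu_0\circ\alpha_{T\Psi}^{0\to\tau})_{T>0}$. Lower semicontinuity of $\omega\mapsto s(\omega|\nu_\tau)$, which holds by regularity of $(\nu_\tau,\Psi_\tau)$, combined with (ii) gives
\[
0\le s(\nu_+|\nu_\tau)\le\liminf_{T\to\infty}s(\nu_0\circ\alpha_{T\Psi}^{0\to\tau}|\nu_\tau)=0,
\]
and the representation $s(\nu_+|\nu_\tau)=-s(\nu_+)+\nu_+(E_{\Psi_\tau})+P(\Psi_\tau)$ then forces $\nu_+\in\cSeq(\Psi_\tau)=\{\nu_\tau\}$ by unique regularity. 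Weak-$\ast$ compactness of $\cSI(\fA)$ thus yields $\nu_0\circ\alpha_{T\Psi}^{0\to\tau}\rightharpoonup\nu_\tau$ for every $\tau\in[0,1]$. Since $\partial_\tau E_{\Psi_\tau}\in\fA$, evaluating this convergence on $\partial_\tau E_{\Psi_\tau}$ yields a statement strictly stronger than~(i) (convergence for all $\tau$, not merely a.e.).

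The hard part is the derivation of the balance equation. The finite-volume commutator computation is transparent, but the passage to the thermodynamic limit of both sides, and in particular the exchange of $\lim_{\Lambda\uparrow\zz^d}$ with $\int_0^\tau\cdot\,\mathrm{d}\sigma$, needs uniform control over the integrand in $(\sigma,\Lambda)$. I expect these bounds to follow from Proposition~\ref{prop:alpha} combined with the $C^1$ regularity of $\tau\mapsto\Psi_\tau$ in $\cBr$, but the verification is the one step that is not a routine application of earlier results in the paper.
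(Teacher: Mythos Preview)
Your overall architecture matches the paper's: both directions proceed via an entropy balance equation, with dominated convergence for (i)$\Rightarrow$(ii) and the lower-semicontinuity/Gibbs-variational argument for (ii)$\Rightarrow$(i). Your (ii)$\Rightarrow$(i) argument is essentially identical to the paper's (which routes it through Theorem~\ref{si-2}).

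The substantive difference is in how the balance equation is obtained. The paper works directly at infinite volume: it shows $E_{\Psi_\tau}\in\Dom(\delta_{\Psi_\upsilon})$, uses the identity $\omega\circ\delta_{\Psi_\tau}(E_{\Psi_\tau})=0$ for every $\omega\in\cSI(\fA)$ (the infinite-volume surrogate of $[H_\Lambda,H_\Lambda]=0$, taken from~\cite[Theorem~2.8]{Jaksic2023a}), and combines this with Part~(3) of Proposition~\ref{prop:alpha} to obtain
\[
\nu_0\circ\alpha_{T\Psi}^{0\to\tau}(E_{\Psi_\tau})-\nu_0(E_{\Psi_0})
=\int_0^\tau\nu_0\circ\alpha_{T\Psi}^{0\to\sigma}(\partial_\sigma E_{\Psi_\sigma})\,\d\sigma
\]
directly, with no thermodynamic limit to perform.

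Your finite-volume route, by contrast, has a gap you have slightly mislocated. The difficulty is not merely the exchange of $\lim_\Lambda$ with $\int_0^\tau$; it is that Proposition~\ref{prop:alpha} only gives $\alpha_{T\Psi,\Lambda}^{0\to\sigma}(A)\to\alpha_{T\Psi}^{0\to\sigma}(A)$ for \emph{fixed} $A\in\fA$, whereas in your identity the observable $H_\Lambda(\Psi_\sigma)$ (and $\partial_\sigma H_\Lambda(\Psi_\sigma)$) grows with $\Lambda$. After dividing by $|\Lambda|$ the result is bounded, but its support is all of $\Lambda$, and the finite-volume dynamics differs from the infinite-volume one on a boundary layer whose thickness depends on $T$. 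Making your limit rigorous would require a bulk/boundary decomposition controlled by Lieb--Robinson bounds, none of which is supplied by Proposition~\ref{prop:alpha}. The paper's infinite-volume identity bypasses this entirely.
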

\begin{proof} (ii)$\Rightarrow$(i) follows from Theorem~\ref{si-2}.

\noindent(i)$\Rightarrow$(ii). The map $[0,1]\ni\tau\mapsto E_{\Psi_\tau}$ is continuously differentiable by assumption. Moreover, it follows from Part~(3) of Proposition~\ref{prop:alpha} that for $A\in\fAloc$,
$$
\alpha_{T\Psi}^{\sigma\to\tau}(A)
=A+T\int_\sigma^\tau\alpha_{T\Psi}^{\sigma\to\upsilon}\circ\delta_{\Psi_\upsilon}(A)\d\upsilon.
$$
By the proof of Part~(2) of~\cite[Theorem~2.8]{Jaksic2023a}, for any $\tau,\upsilon\in[0,1]$ one has $E_{\Psi_\tau}\in\Dom(\delta_{\Psi_\upsilon})$ and 
\be
\omega\circ\delta_{\Psi_\tau}(E_{\Psi_\tau})=0\ \text{for all }\omega\in\cSI(\fA).
\label{equ:magic}
\ee 
Moreover, there exist a sequence $(E_{\Psi_\tau,n})_{n\in\nn}$ in $\fAloc$ such that\footnote{The norm $\|\Psi\|_r$ of a time-dependent interaction $\Psi$ is defined in~\eqref{eq:Jeanne}.} $\|\delta_{\Psi_\upsilon}(E_{\Psi_{\tau,n}})\|\le2\|\Psi\|_r^2$, and
$$
\lim_{n\to\infty}E_{\Psi_\tau,n}=E_{\Psi_\tau},\qquad
\lim_{n\to\infty}\delta_{\Psi_\upsilon}(E_{\Psi_\tau,n})=\delta_{\Psi_\upsilon}(E_{\Psi_\tau}).
$$
Hence, the dominated convergence theorem yields
$$
\alpha_{T\Psi}^{\sigma\to\tau}(E_{\Psi_\tau})
=E_{\Psi_\tau}+T\int_\sigma^\tau\alpha_{T\Psi}^{\sigma\to\upsilon}\circ\delta_{\Psi_\upsilon}(E_{\Psi_\tau})\d\upsilon,
$$
from which we deduce
$$
\partial_\tau\nu_0\circ\alpha_{T\Psi}^{0\to\tau}(E_{\Psi_\tau})
=T\nu_0\circ\alpha_{T\Psi}^{0\to\tau}\circ\delta_{\Psi_\tau}(E_{\Psi_\tau})
+\nu_0\circ\alpha_{T\Psi}^{0\to\tau}(\partial_\tau E_{\Psi_\tau}).
$$
Since $\nu_0\circ\alpha_{T\Psi}^{0\to\tau}\in\cSI(\fA)$ by Part~(2) of Proposition~\ref{prop:alpha}, it follows from~\eqref{equ:magic} that
$$
\nu_0\circ\alpha_{T\Psi}^{0\to\tau}(E_{\Psi_\tau})-\nu_0(E_{\Psi_0})
=\int_0^\tau\nu_0\circ\alpha_{T\Psi}^{0\to\sigma}(\partial_\sigma E_{\Psi_\sigma})\d\sigma.
$$
Taking the limit $T\to\infty$, it follows from our hypotheses and the dominated convergence theorem that
\be
\lim_{T\to\infty}\nu_0\circ\alpha_{\Psi,T}^{0\to\tau}(E_{\Psi_\tau})-\nu_0(E_{\Psi_0})
=\int_0^\tau\nu_\sigma(\partial_\sigma E_{\Psi_\sigma})\d\sigma.
\label{equ:rainyday}
\ee
The uniqueness of the equilibrium state for $\Psi_\tau$ implies that the pressure $\cB^r\ni\Phi\mapsto P(\Phi)$ is differentiable at $\Psi_\tau$, with
$$
P(\Psi_\tau+\Phi)-P(\Psi_\tau)+\nu_\tau(E_\Phi)=o(\|\Phi\|_r)
$$
for $\Phi\in\cBr$ \cite[Theorem~3]{Lanford1968}. Invoking again the differentiability of $\tau\mapsto E_{\Psi_\tau}$ yields 
\be
\partial_\tau P(\Psi_\tau)=-\nu_\tau(\partial_\tau E_{\Psi_\tau}).
\label{re-ln}
\ee
Combining  Relations~\eqref{re-ln} and~\eqref{equ:rainyday} we derive
$$
\lim_{T\to\infty}\nu_0\circ\alpha_{\Psi,T}^{0\to\tau}(E_{\Psi_\tau})+P(\Psi_\tau)=\nu_0(E_{\Psi_0})+P(\Psi_0).
$$
Now, since $(\nu_\tau,\Psi_\tau)$ is a regular pair, we have
$$
s(\nu_0\circ\alpha_{\Psi,T}^{0\to\tau}|\nu_\tau)
=-s(\nu_0)+\nu_0\circ\alpha_{\Psi,T}^{0\to\tau}(E_{\Psi_\tau})+P(\Psi_\tau),
$$
and so
$$
\lim_{T\to\infty}s(\nu_0\circ\alpha_{\Psi,T}^{0\to\tau}|\nu_\tau)=-s(\nu_0)+\nu_0(E_{\Psi_0})+P(\Psi_0)=0.
$$
\end{proof}

The following example illustrates the last result. Let $\Phi_0,\Phi_1\in\cBr$, and let $\lambda\in C^1([0,1])$ be such that $\lambda(0)=0$, $\lambda(1)=1$. Set
\[
\Psi_\tau=\Phi_0+\lambda(\tau)(\Phi_1-\Phi_0),
\]
and assume that the path $\lb\nu,\Psi\rb$ is uniquely regular. Then the entropic  path adiabatic theorem holds for $\lb\nu,\Psi\rb$ iff, for all $\tau\in[0,1]$ with $\partial_\tau \lambda(\tau)\not=0$, 
\[
\lim_{T\rightarrow\infty}\nu_0\circ\alpha_{T\Psi}^{0\to\tau}(E_{\Phi_1-\Phi_0})=\nu_\tau(E_{\Phi_1-\Phi_0}).
\]

We finish this section  with  the translation invariant version Theorem~\ref{prop:so-so-t}. 
This result is direct a consequence  Theorems~\ref{si-2} and~\ref{prop-sun} and  we state it for completeness reason.

\begin{thm}\label{prop-sun-1}
Suppose that the path $\lb\nu,\Psi\rb$ is uniquely regular. If $\Psi\in C^1([0,1],\cBr)$, then the following statements are equivalent:
\begin{enumerate}[label=(\roman*)]
\itemsep0em
\item For Lebesgue a.e.\;$\tau\in[0,1]$,
$$
\lim_{T\to\infty}\nu_0\circ\alpha_{T\Psi}^{0\to\tau}(\partial_\tau E_{\Psi_\tau})
=\nu_\tau(\partial_\tau E_{\Psi_\tau}).
$$
\item The entropic path adiabatic theorem holds for $\lb\nu,\Psi\rb$.
\item The path adiabatic theorem holds for $\lb\nu,\Psi\rb$.
\end{enumerate}	
Moreover, any of the above statements implies
\begin{enumerate}[label=(\roman*)]
\itemsep0em
\setcounter{enumi}{3}
\item For all $\tau\in[0,1]$,
$$
s(\nu_\tau)=s(\nu_0).
$$
\end{enumerate}
\end{thm}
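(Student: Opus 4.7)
My plan is to assemble Theorem~\ref{prop-sun-1} by bookkeeping on top of Theorems~\ref{si-2} and~\ref{prop-sun}. The equivalence (i)$\iff$(ii) I would simply cite from Theorem~\ref{prop-sun}, which states exactly this under the standing assumptions. The implications (ii)$\Rightarrow$(iii) and (ii)$\Rightarrow$(iv) I would extract from Theorem~\ref{si-2}: there the entropic path adiabatic theorem ((iii) of that statement) is equivalent to the conjunction of the path adiabatic theorem and the entropy constancy $s(\nu_0)=s(\nu_\tau)$ ((i) of that statement), so (ii) of the current theorem delivers both (iii) and the final addendum (iv) at once.

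The only step not directly contained in the two cited theorems is (iii)$\Rightarrow$(i), and this I would dispatch in one line. The hypothesis $\Psi\in C^1([0,1],\cBr)$ makes $\tau\mapsto E_{\Psi_\tau}$ a $C^1$ curve in the Banach space $\fA$---a fact already used inside the proof of Theorem~\ref{prop-sun}---so for each $\tau\in[0,1]$ the derivative $\partial_\tau E_{\Psi_\tau}$ is a fixed element of $\fA$. Statement (iii) asserts $\nu_0\circ\alpha_{T\Psi}^{0\to\tau}\rightharpoonup\nu_\tau$ weak* for every $\tau$, which, evaluated at the fixed observable $\partial_\tau E_{\Psi_\tau}\in\fA$, gives
\[
\lim_{T\to\infty}\nu_0\circ\alpha_{T\Psi}^{0\to\tau}(\partial_\tau E_{\Psi_\tau})=\nu_\tau(\partial_\tau E_{\Psi_\tau})
\]
for every, hence \emph{a fortiori} for Lebesgue a.e., $\tau\in[0,1]$. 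This is exactly (i). Chaining the implications yields (iii)$\Rightarrow$(i)$\iff$(ii)$\Rightarrow$(iii)\,\&\,(iv) and closes the circle.

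I do not expect any genuine obstacle: apart from the one-line weak*-continuity argument, the proof is pure logical reshuffling of Theorems~\ref{si-2} and~\ref{prop-sun}. The only point requiring a touch of care is the assertion that ``$\Psi\in C^1([0,1],\cBr)$'' really hands us $\partial_\tau E_{\Psi_\tau}$ as an element of $\fA$ (rather than in some weaker sense), but this is already invoked inside the proof of Theorem~\ref{prop-sun} and may therefore be taken for granted here.
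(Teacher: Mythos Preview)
Your proposal is correct and matches the paper's approach: the paper states that Theorem~\ref{prop-sun-1} is a direct consequence of Theorems~\ref{si-2} and~\ref{prop-sun} and gives no further argument, so your explicit bookkeeping (including the one-line weak$^\ast$ argument for (iii)$\Rightarrow$(i)) is exactly what is implicitly intended.
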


\subsection{Adiabaticity and the approach to equilibrium}
\label{sec-no-go}

Theorem~\ref{adl-thm} points to a close relation between the adiabatic theorem for local perturbations of KMS states and return to equilibrium. In view of the formal analogy between Proposition~\ref{prop:so-so-t} and Theorem~\ref{prop-sun-1}, one may expect a similar relation between the adiabatic theorem for translation invariant interactions and approach to equilibrium. Our last result shows that this is not the case.

We first recall the framework of approach to equilibrium developed in~\cite{Jaksic2023a}. Let $\omega\in\cSI(\fA)$ and $\Phi\in\cBr$. Set
\[
\bar\omega_T=\frac{1}{T}\int_0^T\omega\circ\tau_\Phi^t \d t.
\]
We denote by $\cS_+(\omega, \Phi)$ the set of \wstar{limit} points of the net $(\bar\omega_T)_{T>0}$ as $T\to\infty$. Approach to equilibrium holds in this setting whenever $\cS_+(\omega, \Phi)\cap\cSeq(\Phi)\neq\emptyset$.
\begin{thm} Let $(\nu_0,\Phi_0)$ and $(\nu_1, \Phi_1)$ be two uniquely regular pairs and suppose that $\Phi_0$ and 
$\Phi_1$ are not physically equivalent. Then the following statements are mutually exclusive. 
\begin{enumerate}[label=(\roman*)]
\itemsep0em
\item $\nu_1\in\cS_+(\nu_0,\Phi_1)$, {\sl i.e.,} approach to equilibrium holds for $(\fA,\alpha_{\Phi_1},\nu_0)$.
\item There exists a uniquely regular path $\lb\nu,\Psi\rb$, with $\Psi\in C^1([0,1],\cBr)$, connecting $(\nu_0,\Phi_0)$ to $(\nu_1,\Phi_1)$ and satisfying the path adiabatic theorem. 
\end{enumerate}
\end{thm}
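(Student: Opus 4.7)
The plan is to derive a contradiction from the simultaneous validity of~\textup{(i)} and~\textup{(ii)} by showing that together they force $\Phi_0$ and $\Phi_1$ to be physically equivalent. First, applying Theorem~\ref{prop-sun-1}\,(iv) to the uniquely regular path $\lb\nu,\Psi\rb$ supplied by~\textup{(ii)} yields $s(\nu_\tau)=s(\nu_0)$ for every $\tau\in[0,1]$ and, in particular, $s(\nu_1)=s(\nu_0)$.

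Next, assumption~\textup{(i)} furnishes a sequence $T_k\to\infty$ such that the ergodic means $\bar\nu_{0,T_k}$ converge to $\nu_1$ in the weak-$\ast$ topology. The energy-per-site functional $\cSI(\fA)\ni\omega\mapsto\omega(E_{\Phi_1})$ is affine and weak-$\ast$ continuous (being an absolutely convergent sum of the bounded, weak-$\ast$ continuous evaluations $\omega\mapsto\omega(\Phi_1(X))/|X|$), and, restricted to $\cSI(\fA)$, it is a constant of motion for the autonomous dynamics generated by $\Phi_1$: indeed $\nu_0\circ\alpha_{\Phi_1}^t\in\cSI(\fA)$ for every $t$, and identity~\eqref{equ:magic} gives $\partial_t\,\nu_0\circ\alpha_{\Phi_1}^t(E_{\Phi_1})=\nu_0\circ\alpha_{\Phi_1}^t\circ\delta_{\Phi_1}(E_{\Phi_1})=0$. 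Consequently $\bar\nu_{0,T_k}(E_{\Phi_1})=\nu_0(E_{\Phi_1})$ for every~$k$, and passing to the weak-$\ast$ limit yields $\nu_1(E_{\Phi_1})=\nu_0(E_{\Phi_1})$.

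Combining the two equalities with the specific relative entropy formula~\eqref{ad-wg} for the regular pair $(\nu_1,\Phi_1)$, one computes
\[
s(\nu_0|\nu_1)=-s(\nu_0)+\nu_0(E_{\Phi_1})+P(\Phi_1)=-s(\nu_1)+\nu_1(E_{\Phi_1})+P(\Phi_1)=0,
\]
where the last equality uses $\nu_1\in\cSeq(\Phi_1)$. The Gibbs variational principle thus places $\nu_0$ in $\cSeq(\Phi_1)$, and unique regularity of $(\nu_1,\Phi_1)$ forces $\nu_0=\nu_1$.

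Finally, applying~\eqref{ad-wg} to each of the two regular pairs with the common base point $\nu_0=\nu_1$ yields, for every $\nu\in\cSI(\fA)$,
\[
-s(\nu)+\nu(E_{\Phi_0})+P(\Phi_0)=s(\nu|\nu_0)=-s(\nu)+\nu(E_{\Phi_1})+P(\Phi_1),
\]
so that $\nu(E_{\Phi_0}-E_{\Phi_1})=P(\Phi_1)-P(\Phi_0)$ is independent of $\nu$. This is precisely the Ruelle--Israel criterion of physical equivalence of $\Phi_0$ and $\Phi_1$, contradicting the hypothesis. The delicate point throughout is the second step, namely the weak-$\ast$ continuity and the $\alpha_{\Phi_1}$-invariance of the energy-per-site functional on $\cSI(\fA)$; both properties are standard features of the Ruelle--Israel formalism already recorded in the toolbox of~\cite{Jaksic2023a}, so they present no real obstruction.
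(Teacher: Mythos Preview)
Your argument is correct. Both your proof and the paper's begin identically, invoking Theorem~\ref{prop-sun-1}\,(iv) to extract $s(\nu_0)=s(\nu_1)$ from~(ii). The paper then finishes by a black-box appeal to \cite[Theorem~2.20]{Jaksic2023a}, which asserts that under~(i) the equality $s(\nu_0)=s(\nu_1)$ forces $\nu_0=\nu_1$ and physical equivalence of $\Phi_0,\Phi_1$. You instead unpack precisely this implication: you observe that the specific energy $\omega\mapsto\omega(E_{\Phi_1})$ is weak-$\ast$ continuous on $\cSI(\fA)$ and conserved along the $\alpha_{\Phi_1}$-orbit (via~\eqref{equ:magic}), so it passes unchanged to the ergodic limit $\nu_1$; combining this with the entropy equality and~\eqref{ad-wg} gives $s(\nu_0|\nu_1)=0$, hence $\nu_0\in\cSeq(\Phi_1)=\{\nu_1\}$; and finally equating the two instances of~\eqref{ad-wg} at the common state $\nu_0=\nu_1$ yields the constancy of $\nu\mapsto\nu(E_{\Phi_0-\Phi_1})$, which is one of the equivalent formulations of physical equivalence recorded in \cite[Theorem~2.7]{Jaksic2023a}. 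Your route is thus more self-contained---it reproduces the relevant portion of \cite[Theorem~2.20]{Jaksic2023a} from first principles---while the paper's is shorter because that structural result is already available. One cosmetic remark: since $E_{\Phi_1}\in\fA$ as a norm-convergent sum, you could simply say $\omega\mapsto\omega(E_{\Phi_1})$ is weak-$\ast$ continuous because $E_{\Phi_1}$ is a fixed observable, rather than appealing to termwise convergence.
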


\begin{proof} Approach to equilibrium has been characterized in terms of a strict increase of specific entropy.  According to~\cite[Theorem~2.20]{Jaksic2023a},  if $\nu_1\in\cS_+(\nu_0,\Phi_1)$, then $s(\nu_0)<s(\nu_1)$ since $\Phi_0$ and $\Phi_1$ are not physically equivalent. By Theorem~\ref{prop-sun-1}, this fact is incompatible with the existence of the path $\lb\nu,\Psi\rb$. 
	
Conversely, Theorem~\ref{prop-sun-1} shows that the existence of the path $\lb\nu,\Psi\rb$ implies the equality $s(\nu_0)=s(\nu_1)$. By~\cite[Theorem~2.20]{Jaksic2023a},  if $\nu_1\in\cS_+(\nu_0,\Phi_1)$, then $s(\nu_0)=s(\nu_1)$ implies $\nu_0=\nu_1$ and that $\Phi_0$ and $\Phi_1$ are physically equivalent, which contradicts the assumption of the theorem.
\end{proof}

\section{Proofs of Propositions~\ref{prop-balance} and~\ref{prop:alpha}}
\label{sec-ln}

\subsection{Proof of Proposition~\ref{prop-balance}}
\label{sec-proof-of-prop:balance}

Our standing assumption is that the map $V\in C^1([0,1],\cO)$ takes its value in the self-adjoint elements of $\cO$, $\dot V$ denoting its derivative. Denote by $[0,1]\ni s,t\mapsto\alpha_T^{s\to t}$ the 2-parameter family of $\ast$-automorphisms of $\cO$ uniquely determined by the Cauchy problem
$$
\frac1T\partial_t\alpha_T^{s\to t}(A)=\alpha_T^{s\to t}\circ\delta_{V(t)}(A),\qquad\alpha_T^{s\to s}(A)=A\in\Dom(\delta),\qquad s,t\in[0,1].
$$
We start with: 

\begin{lem}\label{lem-cauchy}
For $s,t\in[0,1]$, one has 
\be
\alpha_T^{s\to t}=\gamma_T^{s\to t}\circ\alpha_{V(t)}^{(t-s)T},
\label{eq:sandyday}
\ee
where $\gamma_T^{s\to t}$ is the inner $\ast$-automorphism of\/ $\cO$ given by
$\gamma_T^{s\to t}(A)=\Gamma_T^{s\to t}A\Gamma_T^{s\to t\ast}$, and the 2-para\-meter family of unitaries $\Gamma_T$ is the solution of the Cauchy problem
\be
\frac{\i}T\partial_t\Gamma_T^{s\to t}=\Gamma_T^{s\to t}\int_s^t\alpha_{V(t)}^{(r-s)T}(\dot V(t))\d r,
\qquad\Gamma_T^{s\to s}=\one\qquad s,t\in[0,1].
\label{eq:GammaDef}
\ee
Moreover, one has $\Gamma_T^{s\to t}\in\Dom(\delta)$ for all $s,t\in[0,1]$.
\end{lem}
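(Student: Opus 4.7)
The strategy is to \emph{define} a candidate family by $\beta_T^{s\to t}(A):=\Gamma_T^{s\to t}\,\alpha_{V(t)}^{(t-s)T}(A)\,\Gamma_T^{s\to t *}$, where $\Gamma_T^{s\to t}$ is produced by the Cauchy problem~\eqref{eq:GammaDef}, and then check that $\beta_T^{s\to t}$ satisfies the \emph{same} defining Cauchy problem as $\alpha_T^{s\to t}$. Uniqueness of the solution will then force $\beta_T^{s\to t}=\alpha_T^{s\to t}$, which is precisely~\eqref{eq:sandyday}.

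To construct $\Gamma_T^{s\to t}$, write $W_t:=\int_s^t\alpha_{V(t)}^{(r-s)T}(\dot V(t))\,\d r$. The $C^1$ assumption on $V$ together with the joint strong continuity of $(t,\tau)\mapsto\alpha_{V(t)}^{\tau}$ imply that $t\mapsto W_t$ is a norm-continuous curve in the self-adjoint part of\/ $\cO$, with $\|W_t\|\leq|t-s|\sup_\tau\|\dot V(\tau)\|$. Hence the linear ODE~\eqref{eq:GammaDef} admits a unique norm-continuous solution, represented by a norm-convergent time-ordered Dyson series, and self-adjointness of $W_t$ yields $\Gamma_T^{s\to t*}\Gamma_T^{s\to t}=\Gamma_T^{s\to t}\Gamma_T^{s\to t*}=\one$.

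The heart of the argument is a Duhamel computation. For $A\in\Dom(\delta)$, I split the total $t$-derivative of $\alpha_{V(t)}^{(t-s)T}(A)$ into a contribution from the variation of $V(t)$ at fixed upper limit and a contribution from the upper limit $(t-s)T$ itself. The standard Duhamel formula applied to the one-parameter automorphism generated by $\delta+\i[V(t),\cdot]$, followed by the rescaling $\tau=(t-s)T$ and the identity $\alpha_{V(t)}^{\tau-r}\circ\i[\dot V(t),\cdot]=\i[\alpha_{V(t)}^{\tau-r}(\dot V(t)),\alpha_{V(t)}^{\tau-r}(\cdot)]$, yields
\[
\partial_t\alpha_{V(t)}^{(t-s)T}(A)=\i T\,[W_t,\alpha_{V(t)}^{(t-s)T}(A)]+T\,\alpha_{V(t)}^{(t-s)T}\!\circ\delta_{V(t)}(A).
\]
Applying the Leibniz rule to $\beta_T^{s\to t}(A)=\Gamma_T^{s\to t}\alpha_{V(t)}^{(t-s)T}(A)\Gamma_T^{s\to t *}$ and inserting~\eqref{eq:GammaDef} for $\partial_t\Gamma_T^{s\to t}$ together with $\partial_t\Gamma_T^{s\to t*}=\i T\,W_t\Gamma_T^{s\to t*}$ (from $W_t^*=W_t$), the four terms involving $W_t$ cancel in pairs, leaving only $T\,\beta_T^{s\to t}(\delta_{V(t)}(A))$. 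Combined with $\beta_T^{s\to s}=\mathrm{Id}$, uniqueness of the defining Cauchy problem for $\alpha_T^{s\to t}$ yields~\eqref{eq:sandyday}.

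The final claim $\Gamma_T^{s\to t}\in\Dom(\delta)$ is the most delicate point. Since $\dot V(t)$ is only assumed to lie in $\cO$, one cannot simply differentiate the Dyson series for $\Gamma_T^{s\to t}$ term by term. I would instead exploit~\eqref{eq:sandyday} itself: the inner automorphism $\gamma_T^{s\to t}=\alpha_T^{s\to t}\circ\alpha_{V(t)}^{-(t-s)T}$ preserves $\Dom(\delta)$ because both factors do, and combining this with Araki's perturbation theory of the modular structure---which controls the implementing unitary of such an inner automorphism in terms of a Connes-cocycle-type object---yields the required smoothness of $\Gamma_T^{s\to t}$ with respect to the unperturbed dynamics. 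I expect this regularity step to be the main obstacle; the algebraic identity~\eqref{eq:sandyday} itself reduces to the Duhamel bookkeeping outlined above.
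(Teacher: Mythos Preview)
Your argument for the identity~\eqref{eq:sandyday} is correct and essentially identical to the paper's: both define the candidate $\gamma_T^{s\to t}\circ\alpha_{V(t)}^{(t-s)T}$, use Duhamel's formula to compute $\partial_t\alpha_{V(t)}^{(t-s)T}(A)$, observe that the commutator with $W_t$ (the paper's $\widetilde{H}_t/T$) is exactly cancelled by the derivative of the conjugating unitary, and conclude by uniqueness of the Cauchy problem.

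Your treatment of $\Gamma_T^{s\to t}\in\Dom(\delta)$, however, has a genuine gap. Knowing that the inner automorphism $\gamma_T^{s\to t}=\alpha_T^{s\to t}\circ\alpha_{V(t)}^{-(t-s)T}$ preserves $\Dom(\delta)$ does \emph{not} imply that its implementing unitary lies in $\Dom(\delta)$; an inner automorphism $A\mapsto UAU^\ast$ can preserve the domain of a derivation without $U$ itself being in that domain (only $[\delta,\mathrm{Ad}_U]$ being bounded is encoded, not $\delta(U)$). The appeal to Araki's perturbation theory or Connes-cocycle machinery is too vague to fill this in, and those tools control modular objects rather than membership in $\Dom(\delta)$.

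The paper's route is direct and relies on a point you have not identified: although $\dot V(t)$ need not lie in $\Dom(\delta)$, the \emph{time-averaged} generator $\widetilde{H}_t=T\int_s^t\alpha_{V(t)}^{(r-s)T}(\dot V(t))\,\d r$ automatically does. One approximates $\dot V(t)$ by $Q_n\in\Dom(\delta_{V(t)})$, and for these the fundamental theorem of calculus gives
\[
\delta_{V(t)}\!\left(T\int_s^t\alpha_{V(t)}^{(r-s)T}(Q_n)\,\d r\right)
=\int_s^t\partial_r\alpha_{V(t)}^{(r-s)T}(Q_n)\,\d r
=\alpha_{V(t)}^{(t-s)T}(Q_n)-Q_n.
\]
The right-hand side converges in norm as $n\to\infty$, so closedness of $\delta_{V(t)}$ yields $\widetilde{H}_t\in\Dom(\delta)$ with $\delta_{V(t)}(\widetilde{H}_t)=\alpha_{V(t)}^{(t-s)T}(\dot V(t))-\dot V(t)$. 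Feeding this into the Dyson series for $\Gamma_T^{s\to t}$ (which you already noted converges in norm) gives $\Gamma_T^{s\to t}\in\Dom(\delta)$ with an explicit, uniformly convergent expression for $\delta(\Gamma_T^{s\to t})$. This smoothing-by-integration trick is the missing idea in your proposal.
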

\begin{proof}
Fix $T>0$, $s\in[0,1]$, and set
$$
\widetilde{H}_t=T\int_s^t\alpha_{V(t)}^{(r-s)T}(\dot V(t))\d r,
$$
so that
\be
\partial_t\gamma_T^{s\to t}(A)=\gamma_T^{s\to t}(-\i[\widetilde{H}_t,A]).
\label{eq:D1}
\ee
For $A\in\Dom(\delta)$, one has $\partial_u\delta_{V(u)}(A)=\i[\dot V(u),A]$, and Duhamel's formula yields
$$
\partial_u\alpha_{V(u)}^{(t-s)T}(A)
=T\int_{s}^{t}\alpha_{V(u)}^{(r-s)T}(\i[\dot V(u),\alpha_{V(u)}^{(t-r)T}(A)])\d r,
$$
so that
$$
\partial_u\alpha_{V(u)}^{(t-s)T}(A)\bigg|_{u=t}=\i[\widetilde{H}_t,\alpha_{V(t)}^{(t-s)T}(A)],
$$
and hence
\be
\partial_t\alpha_{V(t)}^{(t-s)T}(A)=T\alpha_{V(t)}^{(t-s)T}\circ\delta_{V(t)}(A)
+\i[\widetilde{H}_t,\alpha_{V(t)}^{(t-s)T}(A)].
\label{eq:D2}
\ee
Combining Relations~\eqref{eq:D1} and~\eqref{eq:D2}, we derive
\begin{align*}
\partial_t\gamma_T^{s\to t}\circ\alpha_{V(t)}^{(t-s)T}(A)
&=\gamma_T^{s\to t}(-\i[\widetilde{H}_t,\alpha_{V(t)}^{(t-s)T}(A)])
+\gamma_T^{s\to t}(T\alpha_{V(t)}^{(t-s)T}\circ\delta_{V(t)}(A)
+\i[\widetilde{H}_t,\alpha_{V(t)}^{(t-s)T}(A)])\\
&=T\gamma_T^{s\to t}\circ\alpha_{V(t)}^{(t-s)T}\circ\delta_{V(t)}(A).
\end{align*}
Since obviously  $\gamma_T^{s\to s}\circ\alpha_{V(t)}^{(s-s)T}(A)=A$, Relation~\eqref{eq:sandyday} holds.

To prove the last statement, for fixed $s,t\in[0,1]$ let the sequence $(Q_n)_{n\in\nn}\subset\Dom(\delta_{V(t)})$ be such that $\lim_{n\to\infty}Q_n=\dot V(t)$. Setting
$$
H_n=T\int_s^t\alpha_{V(t)}^{(r-s)T}(Q_n)\d r,
$$
and observing that
$$
T\int_s^t\alpha_{V(t)}^{(r-s)T}(\delta_{V(t)}(Q_n))\d r=\int_s^t\partial_r\alpha_{V(t)}^{(r-s)T}(Q_n)\d r=\alpha_{V(t)}^{(t-s)T}(Q_n)-Q_n,
$$
we derive
$$
\delta_{V(t)}(H_n)=\alpha_{V(t)}^{(t-s)T}(Q_n)-Q_n.
$$
From the obvious facts that
$$
\lim_{n\to\infty}H_n=\widetilde{H}_t,\qquad
\lim_{n\to\infty}\delta_{V(t)}(H_n)=\alpha_{V(t)}^{(t-s)T}(\dot V(t))-\dot V(t),
$$
the closedness of $\delta_{V(t)}$ allows us to conclude that $\widetilde{H}_t\in\Dom(\delta_{V(t)})=\Dom(\delta)$ and that
\be
\delta_{V(t)}(\widetilde{H}_t)=\alpha_{V(t)}^{(t-s)T}(\dot V(t))-\dot V(t).
\label{eq:delta(H)}
\ee
Writing the solution to the Cauchy problem~\eqref{eq:GammaDef} as the uniformly convergent Dyson series
$$
\Gamma_T^{s\to t}=\sum_{n\ge0}(-\i)^n\int\limits_{s\le t_1\le\cdots\le t_n\le t}\widetilde{H}_{t_1}\cdots\widetilde{H}_{t_n}\d t_1\cdots\d t_n,
$$
one deduces $\Gamma_T^{s\to t}\in\Dom(\delta)$ for all $s,t\in[0,1]$ and that 
$$
\delta(\Gamma_T^{s\to t})
=\sum_{n\ge0}(-\i)^n\sum_{k=1}^n\ \int\limits_{s\le t_1\le\cdots\le t_n\le t}\widetilde{H}_{t_1}\cdots\delta(\widetilde{H}_{t_k})
\cdots\widetilde{H}_{t_n}\d t_1\cdots\d t_n,
$$
where the series on right hand side  converges uniformly.
\end{proof}
We now proceed with the proof of Proposition~\ref{prop-balance}.

The previous lemma gives
$$
S(\omega_{V(s)}\circ\alpha_T^{s\to t}|\omega_{V(t)})
=S(\omega_{V(s)}\circ\gamma_T^{s\to t}\circ\alpha_{V(t)}^{(t-s)T}|\omega_{V(t)})
=S(\omega_{V(s)}\circ\gamma_T^{s\to t}|\omega_{V(t)}),
$$
and invoking~\cite[Theorem~1.1]{Jaksic2003}\footnote{We note that the definition of relative entropy used in~\cite{Jaksic2003,Derezinski2003a} differs in its sign with the one used here.}, we can write
\be
S(\omega_{V(s)}\circ\alpha_T^{s\to t}|\omega_{V(t)})
=S(\omega_{V(s)}|\omega_{V(t)})
-\i\beta\omega_{V(s)}(\Gamma_T^{s\to t}\delta_{V(t)}(\Gamma_T^{s\to t\ast})).
\label{eq:Ssplit}
\ee
To deal with the first term on the right-hand side, we invoke Araki's perturbation theory (see, e.g., \cite[Theorem~5.1]{Derezinski2003a}).
Setting  $\nu=\omega_{V(s)}$ and $W_t=V(t)-V(s)$, we have
$$
S(\omega_{V(s)}|\omega_{V(t)})=S(\nu|\nu_{W_t})=\beta\nu(W_t)+\log\langle\Omega_\nu,\e^{-\beta(L+\pi_\nu(W_t))}\Omega_\nu\rangle,
$$
where $(\cH_\nu,\pi_\nu,\Omega_\nu)$ denotes the GNS representation of $\cO$ induced by $\nu$, and $L$ is the standard Liouvillean generating the unitary implementation of the $C^\ast$-dynamics $\alpha_{V(s)}$ on $\cH_\nu$. Taking the derivative w.r.t.\;$t$ of the logarithmic term and using the facts that $W_s=0$ and $L\Omega_\nu=0$ gives
$$
\log\langle\Omega_\nu,\e^{-\beta(L+\pi_\nu(W_t))}\Omega_\nu\rangle
=\int_s^t\partial_u\log\langle\Omega_\nu,\e^{-\beta(L+\pi_\nu(W_u))}\Omega_\nu\rangle\d u.
$$
Duhamel's formula further yields
\begin{align*}
\partial_u\log\langle\Omega_\nu,\e^{-\beta(L+\pi_\nu(W_u))}\Omega_\nu\rangle
&=-\int_0^\beta\frac{\langle\Omega_\nu,\e^{-\gamma(L+\pi_\nu(W_u))}
\pi_\nu(\dot V(u))\e^{-(\beta-\gamma)(L+\pi_\nu(W_u))}\Omega_\nu\rangle}{\|\Psi_u\|^2}\d\gamma\\[6pt]
&=-\int_{-\beta/2}^{\beta/2}\frac{\langle\Psi_u,\e^{-\gamma(L+\pi_\nu(W_u))}
\pi_\nu(\dot V(u))\e^{\gamma(L+\pi_\nu(W_u))}\Psi_u\rangle}{\|\Psi_u\|^2}\d\gamma,
\end{align*}
where $\Psi_u=\e^{-\beta(L+\pi_\nu(W_u))/2}\Omega_\nu$ is, up to normalization, the vector representative of $\omega_{V(u)}$ in $\cH_\nu$. It follows that
$$
\partial_u\log\langle\Omega_\nu,\e^{-\beta(L+\pi_\nu(W_u))}\Omega_\nu\rangle
=-\int_{-\beta/2}^{\beta/2}\omega_{V(u)}\left(\alpha_{V(u)}^{\i\gamma}(\dot V(u))\right)\d\gamma
=-\beta\omega_{V(u)}(\dot V(u)),
$$
and hence
$$
S(\omega_{V(s)}|\omega_{V(t)})=\beta\omega_{V(s)}(V(t)-V(s))-\beta\int_s^t\omega_{V(u)}(\dot V(u))\d u
=-\beta\int_s^t(\omega_{V(u)}-\omega_{V(s)})(\dot V(u))\d u.
$$
To deal with the second term on the right-hand side of~\eqref{eq:Ssplit}, we note that
$$
\i\Gamma_T^{s\to t}\delta_{V(t)}(\Gamma_T^{s\to t\ast})
=\int_s^t\i\partial_u\Gamma_T^{s\to u}\delta_{V(u)}(\Gamma_T^{s\to u\ast})\d u,
$$
and so, using Relation~\eqref{eq:delta(H)}, we have that 
\begin{align*}
\i\partial_u\Gamma_T^{s\to u}\delta_{V(u)}(\Gamma_T^{s\to u\ast})
&=\Gamma_T^{s\to u}\widetilde{H}_u\delta_{V(u)}(\Gamma_T^{s\to u\ast})
-\Gamma_T^{s\to u}\delta_{V(u)}(\widetilde{H}_u\Gamma_T^{s\to u\ast})
-\Gamma_T^{s\to u}[\dot V(u),\Gamma_T^{s\to u\ast}]\\[4pt]
&=-\Gamma_T^{s\to u}(\delta_{V(u)}(\widetilde{H}_u)+\dot V(u))\Gamma_T^{s\to u\ast}+\dot V(u)\\[4pt]
&=-\Gamma_T^{s\to u}\alpha_{V(u)}^{(u-s)T}(\dot V(u))\Gamma_T^{s\to u\ast}+\dot V(u)\\[4pt]
&=-\alpha_T^{s\to u}(\dot V(u))+\dot V(u).
\end{align*}
It follows that
$$
-\i\beta\omega_{V(s)}(\Gamma_T^{s\to t}\delta_{V(t)}(\Gamma_T^{s\to t\ast}))
=\beta\int_s^t\left(\omega_{V(s)}\circ\alpha_T^{s\to u}(\dot V(u))-\omega_{V(s)}(\dot V(u))\right)\d u,
$$
and finally, Relation~\eqref{eq:Ssplit} gives
$$
S(\omega_{V(s)}\circ\alpha_T^{s\to t}|\omega_{V(t)})=
\beta\int_s^t
\left(\omega_{V(s)}\circ\alpha_T^{s\to u}(\dot V(u))-\omega_{V(u)}(\dot V(u))\right)\d u.
$$

\subsection{Proof of Proposition~\ref{prop:alpha}}
\label{sec-proof-of-prop:alpha}

We start with some preliminary observations. By~\cite[Theorem~6.2.4]{Bratteli1981}, the time-dependent derivation $[0,1]\ni\tau\mapsto\delta_{\Phi_\tau}$ defined on $\fAloc$ by
$$
\delta_{\Phi_\tau}(A)=\sum_{X\cap\supp(A)\neq\emptyset}\i[\Phi_\tau(X),A],
$$
is closable and its closure, which we shall denote by the same symbol, generates the frozen $C^\ast$-dynamics
$$
\rr\ni t\mapsto\alpha_{\Phi_\tau}^t=\e^{t\delta_{\Phi_\tau}}.
$$
The translation invariance of the interaction $\Phi_\tau$ implies that $\varphi^x(\Dom(\delta_{\Phi_\tau}))=\Dom(\delta_{\Phi_\tau})$ and that
\be
\delta_{\Phi_\tau}\circ\varphi^x=\varphi^x\circ\delta_{\Phi_\tau},
\label{equ:deltatrans}
\ee
for all $x\in\zz^d$.

Note that, for $A\in\fAloc$, one has
$$
\delta_{\Phi_\tau}(A)=\lim_{\Lambda\uparrow\zz^d}\delta_{\Phi_{\tau},\Lambda}(A),
$$
 where the convergence is in norm and 
$$
\delta_{\Phi_{\tau},\Lambda}(A)=\sum_{X\subset\Lambda}\i[\Phi_\tau(X),A]=\i[H_\Lambda(\Phi_\tau),A],
$$
is such that $\supp(\delta_{\Phi_{\tau},\Lambda}(A))\subset\supp(A)\cup\Lambda$, {\sl i.e.,} $\delta_{\Phi_{\tau},\Lambda}(\fAloc)\subset\fAloc$. It follows that the non-auto\-no\-mous $C^\ast$-dynamics generated by the time-dependent local Hamiltonian $H_\Lambda(\Phi_\tau)$ is given by the norm-convergent Dyson expansion
\be
\alpha_{\Phi,\Lambda}^{\sigma\to\tau}(A)=\begin{cases}
\displaystyle A+\sum_{n\ge1}\int_{\Delta_n(\sigma,\tau)}
\delta_{\Phi_{\tau_1},\Lambda}\circ\cdots\circ\delta_{\Phi_{\tau_n},\Lambda}(A)
\,\d\tau_1\cdots\d\tau_n,&\text{ for }0\le\sigma\le\tau\le1;\\[12pt]
\displaystyle A+\sum_{n\ge1}\int_{\Delta_n(\sigma,\tau)}
\delta_{\Phi_{\tau_n},\Lambda}\circ\cdots\circ\delta_{\Phi_{\tau_1},\Lambda}(A)
\,\d\tau_1\cdots\d\tau_n,&\text{ for }0\le\tau\le\sigma\le1;
\end{cases}
\label{equ:DysonLambda}
\ee
for $A\in\fAloc$, where $\Delta_n(\sigma,\tau)=\{(\tau_1,\ldots,\tau_n)\mid0\le\min(\sigma,\tau)\le\tau_1\le\cdots\le\tau_n\le\max(\sigma,\tau)\}$.

In the following, the norm of a time-dependent continuous interaction $[0,1]\ni \tau \mapsto \Phi(\tau)\in \cBr$  is taken to be
\be
\|\Phi\|_r=\sup_{\tau\in[0,1]}\|\Phi_\tau\|_r.
\label{eq:Jeanne}
\ee

\medskip\noindent(1) A trivial extension of~\cite[Lemma~III.3.5]{Israel1979} gives that, for $A\in\fAloc$ and $\tau_1,\ldots,\tau_n\in[0,1]$,
$$
\sum_{X_1,\ldots,X_n\in\cF}
\left\|\i[\Phi_{\tau_1}(X_1),\i[\Phi_{\tau_2}(X_2),\cdots,\i[\Phi_{\tau_n}(X_n),A]\cdots]]\right\|
\le\|A\|\e^{r|\supp(A)|}\left(\frac{2\|\Phi\|_r}{r}\right)^n n!.
$$
Hence,
\be
\sup_{\tau_1,\ldots,\tau_n\in[0,1]}\|\delta_{\Phi_{\tau_1}}\circ\cdots\circ\delta_{\Phi_{\tau_n}}(A)\|
\le\|A\|\e^{r|\supp(A)|}\left(\frac{2\|\Phi\|_r}{r}\right)^n n!,
\label{equ:deltas}
\ee
and the same estimate holds with  $\delta_{\Phi_{\tau_k}}$ replaced by $\delta_{\Phi_{\tau_k},\Lambda}$.
Thus, taking the thermodynamic limit in~\eqref{equ:DysonLambda} yields that, for $|\tau-\sigma|<\epsilon=r/2\|\Phi\|_r$,
\be
\lim_{\Lambda\uparrow\zz^d}\alpha_{\Phi,\Lambda}^{\sigma\to\tau}(A)
=\begin{cases}
\displaystyle A+\sum_{n\ge1}\int_{\Delta_n(\sigma,\tau)}
\delta_{\Phi_{\tau_1}}\circ\cdots\circ\delta_{\Phi_{\tau_n}}(A)
\,\d\tau_1\cdots\d\tau_n,&\text{ for }0\le\sigma\le\tau\le1;\\[12pt]
\displaystyle A+\sum_{n\ge1}\int_{\Delta_n(\sigma,\tau)}
\delta_{\Phi_{\tau_n}}\circ\cdots\circ\delta_{\Phi_{\tau_1}}(A)
\,\d\tau_1\cdots\d\tau_n,&\text{ for }0\le\tau\le\sigma\le1;
\end{cases}
\label{equ:Dyson}
\ee
the limit being uniform for $\tau-\sigma$ in compact subsets of ${]}-\epsilon,\epsilon{[}$. Since the maps $\alpha_{\Phi,\Lambda}^{\sigma\to\tau}:\fAloc\to\fA$ are isometric, their limit $\alpha_{\Phi}^{\sigma\to\tau}$ is norm continuous, and hence uniquely extends by continuity to an isometry on $\fA$. It follows that~\eqref{equ:alphaTDlimit} holds for all $A\in\fA$ and $\sigma,\tau\in[0,1]$ satisfying  $|\tau-\sigma|<\epsilon$. Moreover, as norm-limits of $\ast$-morphisms, the maps $\alpha_{\Phi}^{\sigma\to\tau}$ are themselves $\ast$-morphisms. For $\sigma,\tau,\upsilon\in[0,1]$ such that $\max(|\tau-\sigma|,|\sigma-\upsilon|)<\epsilon$, it follows from the continuity of $\alpha_{\Phi}^{\sigma\to\tau}$ that
$$
\lim_{\Lambda\uparrow\zz^d}\alpha_{\Phi}^{\sigma\to\tau}\circ\alpha_{\Phi,\Lambda}^{\tau\to\upsilon}(A)
=\alpha_{\Phi}^{\sigma\to\tau}\circ\alpha_{\Phi}^{\tau\to\upsilon}(A),
$$
for $A\in\fAloc$. Writing 
$$
\alpha_{\Phi}^{\sigma\to\tau}\circ\alpha_{\Phi,\Lambda}^{\tau\to\upsilon}(A)
=\alpha_{\Phi,\Lambda}^{\sigma\to\upsilon}(A)
+(\alpha_{\Phi}^{\sigma\to\tau}-\alpha_{\Phi,\Lambda}^{\sigma\to\tau})\circ\alpha_{\Phi}^{\tau\to\upsilon}(A)
+(\alpha_{\Phi}^{\sigma\to\tau}-\alpha_{\Phi,\Lambda}^{\sigma\to\tau})\circ
(\alpha_{\Phi,\Lambda}^{\tau\to\upsilon}-\alpha_{\Phi}^{\tau\to\upsilon})(A),
$$
and using that
\begin{gather*}
\lim_{\Lambda\uparrow\zz^d}(\alpha_{\Phi}^{\sigma\to\tau}-\alpha_{\Phi,\Lambda}^{\sigma\to\tau})\circ\alpha_{\Phi}^{\tau\to\upsilon}(A)=0,\\[4pt]
\lim_{\Lambda\uparrow\zz^d}\|(\alpha_{\Phi}^{\sigma\to\tau}-\alpha_{\Phi,\Lambda}^{\sigma\to\tau})\circ
(\alpha_{\Phi,\Lambda}^{\tau\to\upsilon}-\alpha_{\Phi}^{\tau\to\upsilon})(A)\|
\le2\lim_{\Lambda\uparrow\zz^d}\|\alpha_{\Phi,\Lambda}^{\tau\to\upsilon}(A)-\alpha_{\Phi}^{\tau\to\upsilon}(A)\|=0,
\end{gather*}
we derive
\be
\alpha_{\Phi}^{\sigma\to\tau}\circ\alpha_{\Phi}^{\tau\to\upsilon}(A)=\alpha_{\Phi}^{\sigma\to\upsilon}(A),
\label{equ:chain}
\ee
provided that $|\upsilon-\sigma|<\epsilon$. This identity extends by continuity to all $A\in\fA$, and allows  to extend the family of $\ast$-morphisms $(\alpha_\Phi^{\sigma\to\tau})_{|\tau-\sigma|<\epsilon}$ to arbitrary $\sigma,\tau\in[0,1]$ by setting
\be
\alpha_\Phi^{\sigma\to\tau}=\alpha_\Phi^{\sigma\to\tau_1}\circ\alpha_\Phi^{\tau_1\to\tau_2}\circ\cdots
\circ\alpha_\Phi^{\tau_{n-1}\to\tau_n}\circ\alpha_\Phi^{\tau_n\to\tau},
\label{equ:prodexp}
\ee
where $\tau_0=\sigma,\tau_1,\ldots,\tau_{n+1}=\tau$ are such that $\max_k|\tau_{k+1}-\tau_k|<\epsilon$. One easily checks that the left-hand side of this expression does not depend on the choice of the subdivisions $\tau_1,\ldots,\tau_n$ on the right-hand side. By construction, this extension satisfies~\eqref{equ:chain} for all $\sigma,\tau,\upsilon\in[0,1]$. From the telescopic expansion based on~\eqref{equ:prodexp},
$$
\alpha_\Phi^{\sigma\to\tau}(A)-\alpha_{\Phi,\Lambda}^{\sigma\to\tau}(A)
=\sum_{k=0}^n\alpha_{\Phi,\Lambda}^{\sigma\to\tau_1}\circ\cdots\circ\alpha_{\Phi,\Lambda}^{\tau_{k-1}\to\tau_k}\circ
(\alpha_{\Phi}^{\tau_k\to\tau_{k+1}}-\alpha_{\Phi,\Lambda}^{\tau_k\to\tau_{k+1}})\circ
\alpha_{\Phi}^{\tau_{k+1}\to\tau_{k+2}}\circ\cdots\circ\alpha_{\Phi}^{\tau_n\to\tau}(A),
$$
we derive
$$
\|\alpha_\Phi^{\sigma\to\tau}(A)-\alpha_{\Phi,\Lambda}^{\sigma\to\tau}(A)\|\le
\sum_{k=0}^n\|(\alpha_{\Phi}^{\tau_k\to\tau_{k+1}}-\alpha_{\Phi,\Lambda}^{\tau_k\to\tau_{k+1}})(A_k)\|,
$$
with $A_k=\alpha_{\Phi}^{\tau_{k+1}\to\tau_{k+2}}\circ\cdots\circ\alpha_{\Phi}^{\tau_n\to\tau}(A)$. It follows that~\eqref{equ:alphaTDlimit} holds for all $A\in\fA$ and all $\sigma,\tau\in[0,1]$. Note that~\eqref{equ:chain} implies that $\alpha_\Phi^{\sigma\to\tau}\circ\alpha_\Phi^{\tau\to\sigma}=\alpha_\Phi^{\sigma\to\sigma}=\mathrm{Id}$, which shows that the maps $\alpha_\Phi^{\sigma\to\tau}$ are $\ast$-automorphisms of $\fA$. 

\medskip\noindent(2) The translation invariance $\alpha_{\Phi}^{\sigma\to\tau}\circ\varphi^x=\varphi^x\circ\alpha_{\Phi}^{\sigma\to\tau}$ follows immediately from Relation~\eqref{equ:deltatrans}.

\medskip\noindent(3) For $A\in\fAloc$ and $|\tau-\sigma|<\epsilon/2$, combining the Dyson expansion~\eqref{equ:Dyson} with the estimate~\eqref{equ:deltas}, we get
$$
\|\alpha_\Phi^{\sigma\to\tau}(A)-A\|\le2\epsilon^{-1}\|A\|\e^{d|\supp(A)|}|\tau-\sigma|.
$$
Since $\alpha_\Phi^{\sigma\to\tau}$ is an isometry and $\fAloc$ is dense in $\fA$, we conclude that the map $(\sigma,\tau)\mapsto\alpha_\Phi^{\sigma\to\tau}$ is strongly continuous on the diagonal $(\tau,\tau)\in[0,1]\times[0,1]$. For $(\sigma_0,\tau_0)\in[0,1]\times[0,1]$, Property~\eqref{equ:chain} gives
$$
\alpha_\Phi^{\sigma\to\tau}(A)-\alpha_\Phi^{\sigma_0\to\tau_0}(A)
=\alpha_\Phi^{\sigma\to\tau_0}(\alpha_\Phi^{\tau_0\to\tau}(A)-A)+(\alpha_\Phi^{\sigma\to\sigma_0}(B)-B),
$$
with $B=\sigma_\Phi^{\sigma_0\to\tau_0}(A)$, which shows strong continuity at $(\sigma_0,\tau_0)$.

From~\eqref{equ:chain} and the Dyson expansion~\eqref{equ:Dyson}, we further derive that
$$
\partial_\tau\alpha_{\Phi}^{\sigma\to\tau}(A)=
\lim_{\upsilon\to0}\frac1\upsilon
\left(\alpha_{\Phi}^{\sigma\to\tau+\upsilon}(A)-\alpha_{\Phi}^{\sigma\to\tau}(A)\right)
=\lim_{\upsilon\to0}\frac1\upsilon\alpha_{\Phi}^{\sigma\to\tau}(\alpha_{\Phi}^{\tau\to\tau+\upsilon}(A)-A)
=\alpha_{\Phi}^{\sigma\to\tau}\circ\delta_{\Phi_\tau}(A),
$$
holds in norm for arbitrary $\sigma,\tau\in[0,1]$ and $A\in\fAloc$. 

\medskip\noindent(4) We will consider the case $\sigma<\tau$, the opposite case  is similar. We also observe that Part~(1) extends to time-dependent piecewise continuous interactions $\Phi$. Indeed, the continuity was only tacitly used to justify the Riemann integration in the Dyson expansions~\eqref{equ:DysonLambda} and~\eqref{equ:Dyson}. 

Let $\Phi,\Psi:[0,1]\to\cBr$ be piecewise continuous and set $\bar\epsilon=r/2\max(\|\Phi\|_r,\|\Psi\|_r)$. The telescopic expansion
$$
\delta_{\Phi_{\tau_1}}\circ\cdots\circ\delta_{\Phi_{\tau_n}}
-\delta_{\Psi_{\tau_1}}\circ\cdots\circ\delta_{\Psi_{\tau_n}}
=\sum_{k=1}^n\delta_{\Phi_{\tau_1}}\circ\cdots\circ\delta_{\Phi_{\tau_{k-1}}}
\circ\delta_{(\Phi-\Psi)_{\tau_k}}
\circ\delta_{\Psi_{\tau_{k+1}}}\circ\cdots\circ\delta_{\Psi_{\tau_n}},
$$
combined with the estimate~\eqref{equ:deltas}, gives
$$
\sup_{\tau_1,\ldots,\tau_n\in[0,1]}\|\delta_{\Phi_{\tau_1}}\circ\cdots\circ\delta_{\Phi_{\tau_n}}(A)
-\delta_{\Psi_{\tau_1}}\circ\cdots\circ\delta_{\Psi_{\tau_n}}(A)\|
\le2n\|A\|\e^{r|\supp(A)|}\bar\epsilon^{-n+1}\frac{\|\Phi-\Psi\|_r}{r}n!,
$$
for $A\in\fAloc$. Thus, it follows from the Dyson expansion~\eqref{equ:Dyson} that
$$
\sup_{|\tau-\sigma|\le\bar\epsilon/2}\|\alpha_\Phi^{\sigma\to\tau}(A)-\alpha_\Psi^{\sigma\to\tau}(A)\|
\le\frac{4\bar\epsilon}r\|A\|\e^{r|\supp(A)|}\|\Phi-\Psi\|_r.
$$
By density/continuity, we derive that
\be
\lim_{\Psi\to\Phi}\|\alpha_\Phi^{\sigma\to\tau}(A)-\alpha_\Psi^{\sigma\to\tau}(A)\|=0
\label{equ:PhiCont}
\ee
for all $A\in\fA$ and $|\tau-\sigma|\le\bar\epsilon/2$. Finally, a telescopic expansion analogous to~\eqref{equ:prodexp} and with $\max_k(|\tau_{k+1}-\tau_k|)<\bar\epsilon/2$ gives
$$
\|\alpha_\Phi^{\sigma\to\tau}(A)-\alpha_{\Psi}^{\sigma\to\tau}(A)\|\le
\sum_{k=0}^n\|(\alpha_{\Phi}^{\tau_k\to\tau_{k+1}}-\alpha_{\Psi}^{\tau_k\to\tau_{k+1}})(A_k)\|,
$$
with $A_k=\alpha_{\Phi}^{\tau_{k+1}\to\tau_{k+2}}\circ\cdots\circ\alpha_{\Phi}^{\tau_n\to\tau}(A)$, from which we conclude that~\eqref{equ:PhiCont} holds for all $A\in\fA$ and all $\sigma,\tau\in[0,1]$.

To complete the proof, given the continuous time-dependent interaction $\Phi$, set
$$
\Phi_t^{(N)}=1_{[0,\upsilon_0{[}}(t)\Phi(t)+
\sum_{k=0}^{N-1}1_{[\upsilon_k,\upsilon_{k+1}{[}}(t)\Phi_{\upsilon_k}
+1_{[\upsilon_N,1]}(t)\Phi_t,
$$
where $1_I$ denotes the indicator function of the set $I$. Since
$$
\sup_{t\in[0,1]}\|\Phi_t-\Phi^{(N)}_t\|_r\le\sup_{|t-s|<\xi_N}\|\Phi_t-\Phi_s\|_r,
$$
the uniform continuity of the map $[0,1]\ni\tau\mapsto\Phi_\tau\in\cBr$ implies
$$
\lim_{N\to\infty}\|\Phi-\Phi^{(N)}\|_r=0.
$$
Thus, the claim follows from~\eqref{equ:PhiCont} and the fact that
$$
\alpha_{\Phi^{(N)}}^{\sigma\to\tau}(A)=\alpha_{\Phi_{\upsilon_0}}^{\xi_N}\circ\cdots\circ\alpha_{\Phi_{\upsilon_{N-1}}}^{\xi_N}(A).
$$

\appendix
\section{Glossary of terms}
\label{app:notations}

More details can be found in~\cite{Jaksic2023a}.

\begin{itemize}
\item $\fA$: the $C^\ast$-algebra of a spin system on $\zz^d$, or equivalently, the gauge-invariant sector of the fermionic CAR algebra on $\ell^2(\zz^d)$.
\item $\cF$: the set of all finite subsets of $\mathbb Z^d$.
\item $\Phi$: an interaction, namely a translation invariant family $\{\Phi(X)\}_{X\in\cF}$ of self-adjoint elements of $\fA$ with $\supp(\Phi)=X$.
\item $\cBr$: the Banach space of interactions satisfying $\displaystyle\|\Phi\|_r=\sum_{X\ni0}\e^{r(|X|-1)}\|\Phi(X)\|<\infty$, ($r>0$).
\item $\zz^d\ni x\mapsto\varphi^x$: the group action of $\zz^d$ on $\fA$. 
\item $\cSI(\fA)$: the set of translation invariant states on $\fA$.
\item $s(\nu)$: the specific entropy of a state $\nu\in\cSI(\fA)$.
\item $s(\nu|\omega)$: the specific relative entropy of two states $\nu,\omega\in\cSI(\fA)$.
\item $P(\Phi)$: the pressure of the interaction $\Phi$.
\item $E_\Phi= \sum_{X\ni0} |X|^{-1} \Phi(X)$, so that $\nu(E_\Phi)$ is the expected specific energy of the interaction $\Phi$ in the state $\nu\in\cSI(\fA)$.
\item $\cSeq(\Phi)=\{\nu\in\cSI(\fA)\mid P(\Phi)=s(\nu)-\nu(E_\Phi)\}$: the set of equilibrium states for $\Phi$ (from the Gibbs variational principle).
\item $\cSwg(\Phi)$: the set of weak Gibbs states for $\Phi$, see~\cite[Section~2.2]{Jaksic2023a}.
\item $\alpha_\Phi$: the Heisenberg dynamics generated by the (time-independent) interaction $\Phi$.
\item  Physical equivalence: two interactions $\Phi, \Psi\in \cBr$ are physically equivalent  iff  $\alpha_\Phi=\alpha_\Psi$, see~\cite[Theorem~2.7]{Jaksic2023a}. 
\end{itemize}

\bibliographystyle{capalpha}
\bibliography{MASTER}

\end{document}